\newtheorem{theorem}{Theorem}[section]
\newtheorem{proposition}{Proposition}[section]
\newtheorem{corollary}{Corollary}[section]
\newtheorem{remark}{Remark}[section]
\newtheorem{lemma}{Lemma}[section]
\newtheorem{definition}{Definition}[section]
\newcommand{\RR}{\mathbb R}
\renewcommand{\b}{\boldsymbol} \newcommand{\ten}{\boldsymbol}
\renewcommand{\tilde}{\widetilde} 
\newcommand{\map}{\mathcal M}
\newcommand{\bone}{{\b 1}} 
\newcommand{\dvec}{{\b d}}
\newcommand{\hvec}{{\b h}}
\newcommand{\xvec}{{\b x}}
\newcommand{\cvec}{{\b c}}
\newcommand{\yvec}{{\b y}}
\newcommand{\vvec}{{\b v}}
\newcommand{\wvec}{{\b w}}
\newcommand{\zvec}{{\b z}}
\newcommand{\R}{{\mathbb{R}}}
\newcommand{\Rn}{\R^n}
\newcommand{\Rnn}{\R^{n\times n}}
\newcommand{\Rnnn}{\R^{n\times n\times n}}
\newcommand{\norm}[1]{\|#1\|}
\begin{document}
\title{A framework for second order eigenvector centralities and clustering coefficients}

\author{Francesca Arrigo\thanks{Department of Mathematics and Statistics, University of Strathclyde, GX11H Glasgow (UK)}
\and Desmond J. Higham\thanks{School of Mathematics, University of Edinburgh, EH93FD Edinburgh (UK)}
\and Francesco Tudisco\thanks{School of Mathematics, GSSI Gran Sasso Science Institute, 67100 L'Aquila (Italy)}
}

\maketitle

\begin{abstract}
We propose  and analyse 
a general tensor-based framework
for incorporating 
second order
features into network measures.
This approach allows us to combine 
traditional 
pairwise links with
information
that 
records whether triples of nodes are involved in 
wedges or triangles.
Our treatment covers classical spectral methods and recently proposed cases from the literature,  but we  
also identify many interesting extensions. 
In particular, we define a mutually-reinforcing (spectral)
version of the 
 classical clustering coefficient.
 The underlying  
object of study is a 
constrained nonlinear 
eigenvalue problem 
associated with a cubic tensor.
Using recent results from 
nonlinear Perron--Frobenius theory, we 
establish existence and uniqueness under appropriate conditions,
and show that the new spectral measures can be computed  
efficiently with a nonlinear power method.
To illustrate the added value 
of the new formulation, we analyse the measures on a class of synthetic networks.
We also give computational results
on centrality and link prediction 
for real-world 
networks.   
\\[.5em]
\textbf{\sffamily AMS subject classifications.}
15A69, 
05C65, 
91D30, 
05C81, 
05C82, 
37C25 
\\[.5em]
\textbf{\sffamily Keywords.} Clustering coefficient, Eigenvector centrality, Higher order network analysis, Tensor, Hypergraph, Perron-Frobenius theory, Link prediction.
\end{abstract}

\maketitle

\section{Introduction and motivation}
The classical paradigm in 
network science is to analyse a complex system by focusing on  
pairwise interactions; that is, by studying lists of nodes and edges.
However, it is now apparent that many important features arise through larger groups of nodes
acting together \cite{benson2016higher}. For example, 
  the \emph{triadic closure} principle 
  from the social sciences 
   suggests that connected node triples, or triangles,
    are important building blocks \cite{bianconi2014triadic,EG19,estrada2015predicting,GHP12}.
     Of course, there is a sense in which 
     many algorithms
     in network science \emph{indirectly} go beyond pairwise interactions by considering traversals around the network.         However,  recent work 
           \cite{B19,BASJK18,EG19, IPBL19} has shown that there is 
            benefit in \emph{directly} taking account of higher-order neighbourhoods when 
            designing algorithms and models. 
             
             From the point of view of algebraic topology, higher-order relations coincide with different homology classes and the idea of exploring connections of higher-order in networks is analogous to the idea of forming a filtered cell complex in topological data analysis \cite{edelsbrunner2010computational}.
In a similar manner to point clouds, complex networks modeling various type of interactions (such as social, biological, communication or food networks) have an intrinsic higher-order organization \cite{benson2016higher}.
So, efficiently accounting for higher-order topology can allow more robust and effective quantification of nodal importance in various senses \cite{C08,OPTH17}.

 Our aim here is to develop and analyse 
a general framework
for incorporating 
second order
features; see Definition~\ref{def:map}. 
This takes the form of a 
constrained nonlinear 
eigenvalue problem 
associated with a  nonlinear mapping defined in terms of a square matrix and a cubic tensor.
For specific parameter choices we recover  
both standard and recently proposed network measures as special cases.
We also 
construct 
many interesting new alternatives.
In this eigenproblem-based setting, the network 
measures naturally incorporate \emph{mutual reinforcement}; 
important objects are those that interact with many other important objects.  
The classic PageRank algorithm
\cite{Gleich15} is perhaps the best known example of such a measure.
Within this setting, in Definition~\ref{def:spec_clus}
 we define for the first time 
a mutually reinforcing version of the 
classical 
\emph{Watts-Strogatz Clustering Coefficient}
\cite{WS98}; here we give extra weight to nodes that form 
triangles with 
nodes that are themselves involved in important triangles.  
We show that our general 
framework can be studied 
using recently developed tools from 
nonlinear Perron--Frobenius theory.
As well as deriving  
existence and uniqueness results
we show that
these measures  
are computable 
via a 
nonlinear extension of the power method; see Theorem~\ref{thm:theory}.
 
The manuscript is organized as follows.
 In section~\ref{sec:bgr} we 
 summarize relevant existing work on spectral measures in network science. 
 Section~\ref{sec:model} sets out a general framework for combining first and second order  
information through a tensor-based nonlinear eigenvalue problem.
 We also give several specific examples in order to show how standard 
 measures can be generalized by including second order terms.
  In section~\ref{sec:theory} we study theoretical and practical issues.  
 Section~\ref{sec:as} illustrates the effect of using second order 
  information through a theoretical analysis on a specific class of networks.
  In section~\ref{sec:numerical} we test the new framework on real large scale networks in the context of centrality assignment and link prediction.
  Conclusions are provided in section~\ref{sec:conc}.

\section{Background and related work}
\label{sec:bgr}

\subsection{Notation}

A {\it network} or {\it graph} $G=(V,E)$ is defined as a pair of sets: nodes $V = \{1,2,\ldots, n\}$ and edges $ E \subseteq V\times V$ among them. 
We assume the graph to be undirected, so that for all $(i,j)\in E$ it also holds that $(j,i)\in E$, unweighted, so that all connections in the network have the same ``strength", and connected, so that it is possible to reach any node in the graph from any other node by following edges. 
We further assume for simplicity that the graph does not contain self-loops, i.e., edges that point from a node to itself.

A graph may  be represented via its 
\emph{adjacency matrix}, $A=(A_{ij})\in\mathbb{R}^{n\times n}$, 
where 
$A_{ij} = 1$ if $(i,j)\in E$ and 
$A_{ij} = 0$ otherwise.
Under our assumptions, this matrix will be symmetric, binary and irreducible. 
We write $G_A$ to denote the graph associated with the adjacency matrix $A$.

We let $\bone \in\mathbb{R}^{n}$ denote the vector with all
components equal to $1$ and $\bone_i\in\mathbb{R}^n$ denote the $i$th vector of the standard basis of $\mathbb{R}^n$.

\subsection{Spectral centrality measures}
A centrality measure 
quantifies the importance of each node by assigning to it a nonnegative value.
This assignment must be invariant under graph isomorphism, meaning that relabeling the nodes does not affect the values they are assigned. 
We focus here 
on {\it degree centrality} and a family of centrality measures that can be described via an eigenproblem involving the adjacency matrix. This latter family includes as special cases \textit{eigenvector centrality} and \textit{PageRank}. 

The {\it degree centrality} of a node is found by simply counting the number of neighbours 
that it possesses; so 
node $i$ is assigned the value 
$d_i$, where $\dvec = A\bone$. 
Degree centrality treats all connections equally; it does not 
take account of the importance of those neighbours.
By contrast 
\emph{eigenvector centrality} 
is based on a recursive relationship where node $i$ is assigned a value
$x_i \ge 0$ such that $\xvec$ is proportional to $A \xvec$.
We will describe this type of measure as 
\emph{mutually reinforcing}, because it gives extra credit to nodes that 
 have more important neighbours.
Under our assumption that $A$ is irreducible, 
the eigenvector centrality
measure $\xvec$ corresponds to the Perron--Frobenius eigenvector of $A$.
We note that this measure was popularized in the 1970s 
by researchers in the social sciences, \cite{Newmanbook}, but can be traced back to
algorithms used 
in the 19th century
for ranking 
chess players, \cite{chess19}. 
For our purposes, it is useful to consider a general class of  
eigenvector based measures of the form 
\begin{equation}\label{eq:eig_linear}
\xvec \geq 0 \quad \text{such that} \quad  M\xvec = \lambda\, \xvec,
\end{equation}
where $M \in\mathbb{R}^{n\times n}$
is defined in terms of the adjacency matrix $A$.
For example, we may use  the adjacency matrix itself, $M=A$, or the \emph{PageRank matrix}  
\begin{equation}
M = c AD^{-1} + (1-c)\vvec\bone^T,
\label{eq:pgmat}
\end{equation}
with $c\in (0,1)$, $\vvec\geq 0$ such that $\norm{\vvec}_1 = 1$ and $D$ the diagonal matrix such that $D_{ii} = d_i$.  
With this  second choice, the eigenvector solution of \eqref{eq:eig_linear}  is the {\it PageRank} vector \cite{Gleich15}. 

\subsection{Watts-Strogatz clustering coefficient} 
The Watts-Strogatz clustering coefficient was used in~\cite{WS98} 
to quantify an aspect of  transitivity for each node. 
To define this coefficient, we use 
$
\triangle(i) = (A^3)_{ii}/2
$
to denote 
the number of {\it unoriented} triangles involving node $i$. 
Note that node $i$ is involved in exactly $d_i(d_i-1)/2$ \emph{wedges} centred at $i$, 
that is, paths of the form $hij$ where $h,i,j$ are distinct.
Hence node $i$ can be involved in at most  
$d_i(d_i-1)/2$ triangles. 
The {\it local Watts--Strogatz clustering coefficient} of node $i$ is defined as the fraction of wedges that are closed into triangles:
\begin{equation}\label{eq:WSCC}
    c_i = 
    \begin{cases}
    \frac{2\triangle(i)}{d_i(d_i-1)} & \text{if } d_i\geq 2 \\
    0 & \text{otherwise.}
    \end{cases}
\end{equation}
It is easy to see that $c_i\in[0,1]$ with $c_i=0$ if node $i$ does not participate in any triangle and $c_i=1$ if node $i$ has not left any wedges unclosed.

Related to this measure of transitivity for nodes there are two 
network-wide versions; the {\it average clustering coefficient} 
\begin{equation*}\label{eq:aWSCC}
\overline{C} 
= \frac{1}{n}\sum_{i=1}^nc_i = \frac{2}{n}\sum_{i: d_i\geq 2} \frac{\triangle(i)}{d_i(d_i-1)}
\end{equation*}
and the \textit{global clustering coefficient} or {\it graph transitivity}~\cite{luce1949method} 
\begin{equation*}\label{eq:gWSCC}
    \widehat{C} = \frac{6|K_3|}{\sum_i d_i(d_i-1)},
\end{equation*}
where $|K_3|$ is the number of unoriented triangles in the network and the multiplicative factor of 6 comes from the fact that each triangle closes six wedges, i.e., the six ordered pairs of edges in the triangle.
This latter measure has been observed to typically take values between $0.1$ and $0.5$ for real world networks;  see~\cite{girvan2002}.  
The global and average clustering coefficients have been 
found to capture meaningful features and have found several applications \cite{mcgraw2005clustering,newman2001structure}; 
however, they may behave rather differently for certain classes of networks \cite{estrada2016local}. 
In this work we focus on the local measure defined in  \eqref{eq:WSCC}.
Beyond social network analysis, this index has  
found application, for example,  in machine learning pipelines, where nodes features are employed to detect outliers~\cite{lafond2014anomaly} or to inform role discovery~\cite{ahmed2018learning,henderson2012rolx},
in epidemiology, where efficient 
vaccination strategies are needed \cite{EnKa18},  
and in psychology~\cite{bearman2004suicide}, where it is desirable to identify  at-risk 
individuals.

We see from (\ref{eq:WSCC}) that the 
Watts--Strogatz clustering coefficient 
may be viewed as a second order equivalent of degree centrality 
in the sense that it is not mutually reinforcing---a node is not given any extra credit for 
forming triangles with well-clustered nodes.
In Definition~\ref{def:spec_clus} below we show how  a 
mutually reinforcing clustering coefficient 
can be defined.

\section{General eigenvector model}\label{sec:model}

To incorporate second order information, 
given a tensor $\ten T\in\Rnnn$ and a parameter $p\in\R$ we define the operator 
$\ten T_p:\Rn\to\Rn$ that maps the vector $\xvec\in\Rn$ to the vector entrywise defined as 
\begin{equation}
\ten T_p(\xvec)_i = \sum_{j,k=1}^n\ten T_{ijk}\ \mu_p(x_j,x_k),
\label{eq:Tp}
\end{equation}
where $\mu_p(a,b)$ is the {\it power} (or {\it binomial}) {\it mean} 
\[
\mu_p(a,b) = \left(\frac{|a|^p+|b|^p}{2}\right)^{1/p}.
\]
Recall that the following well known properties hold for $\mu$: 
i) $\lim_{p\to 0}\mu_p(a,b)  = \sqrt{|ab|}$ is the geometric mean; 
ii) $\mu_{-1}(a,b) = 2(|x|^{-1}+|y|^{-1})^{-1}$ is the harmonic mean;   iii) $\lim_{p\to+\infty}\mu_p = \max\{|a|,|b|\}$ is the maximum function; whereas $\lim_{p\to-\infty}\mu_p =  \min\{|a|,|b|\}$ is the minimum.

We may then define the following nonlinear network operator, and associated 
spectral centrality measure, 
which combines first and second order interactions.

\begin{definition}\label{def:map}
Let $\alpha\in\R$ be such that $0\leq \alpha \leq 1$, let $p\in \RR$ and let $M\in\Rnn$ and $\ten T\in\Rnnn$ be an entrywise nonnegative square matrix and an entrywise nonnegative cubic tensor associated with the network, respectively. Define $\map:\RR^n\to\RR^n$ as
\begin{equation}\label{eq:map}
\map(\xvec) = \alpha M\xvec +(1-\alpha) \ten T_p(\xvec).
\end{equation}
Then the
corresponding
\emph{first and second order eigenvector centrality} of node $i$ is given by $x_i \ge 0$, 
where   $\xvec$ solves the constrained nonlinear eigenvalue problem  
\begin{equation}\label{eq:eig_gen}
\xvec \geq 0 \quad \text{such that} \quad  
 \map(\xvec) = \lambda\,\xvec.
 \end{equation}
\end{definition}

If we set $\alpha = 1$
in
(\ref{eq:eig_gen})
then only 
first order interactions are considered, and 
we return to the 
classical eigenvector centrality measures discussed in section~\ref{sec:bgr}.
Similarly, with $\alpha = 0$ only second order interactions are relevant.  
 
 In the next subsection we discuss specific choices for 
 $M$ and $\ten T$.
 
We also note that in order for the measure in Definition~\ref{def:map}
to be well defined, there must exist a unique solution to the problem 
(\ref{eq:eig_gen}). We consider this issue in section~\ref{sec:theory}.

\subsection{Specifying  $M$ and $\ten T$}\label{ssec:choice_MT}

In Definition~\ref{def:map}, the matrix $M$ 
should encode information about the first order (edge)
interactions, with 
the tensor $\ten T$ representing the triadic relationships among node triples, that is, second order interactions. 

Useful choices of $M$ are therefore the adjacency matrix or 
the PageRank matrix 
(\ref{eq:pgmat}).
Another viable choice, which we will use in some of the numerical experiments, 
is a rescaled version of the adjacency matrix $M = AD^{-1}$, which we will refer to as
the  \textit{random walk matrix}.

We now consider some choices for the tensor $\ten T$ to represent second order interactions. 

\vspace{.5em}

\textbf{Binary triangle tensor.} 
Perhaps the simplest choice of second order tensor is
\begin{equation}\label{eq:TB}
(\ten T_B)_{ijk}  = \begin{cases} 1 & \text{if }i,j,k \text{ form a triangle}\\
0 & \text{otherwise.}
\end{cases}
\end{equation}
As discussed, for example, in \cite{schank2005finding},  we can build ${\ten T_B}$ with worst case computational complexity of $O(n^3)$ or $O(m^{3/2})$, where $n$ is the number of nodes in the network and $m$ is the number of edges. Moreover, in \cite{benson2015tensor} the authors construct the triangles tensor of four large real-world networks {\sc (Email EUAll, soc Epinions1, wiki Talk, twitter combined)} and observe that the number of non-zero entries in ${\ten T_B}$ is $O(6m)$. Note also that this tensor is closely related to the matrix $A\circ A^2$, 
where $\circ$ denotes the componentwise product
(also called the Hadamard or Schur product),
as shown in \eqref{eq:TA}.

It can be easily verified that, regardless of the choice of $p$, $\big((\ten T_B)_p(\bone)\big)_i = (A^3)_{ii} = 2\triangle(i)$ for all $i\in V$.  

\vspace{.5em}

\textbf{Random walk triangle tensor.} A ``random walk" normalization of the tensor $\ten T_B$ in \eqref{eq:TB}, which will be denoted by $\ten T_{W}\in\Rnnn$,  is entrywise defined as
\begin{equation}\label{eq:Tw}
(\ten T_W)_{ijk}  = \begin{cases} \frac{1}{\triangle(j,k)} & \text{if }i,j,k \text{ form a triangle}\\
0 & \text{otherwise,}
\end{cases}
\end{equation}
where $\triangle(j,k) = (A\circ A^2)_{jk}$ is the number of triangles
involving the edge $(j,k)$.  
This is reminiscent of the random walk matrix $M_{ij} = (AD^{-1})_{ij}=\delta_{ij\in E}/d_j$ (here $\delta$ denotes the Kronecker delta) and this is the reason behind the choice of the name.

\vspace{.5em}

\textbf{Clustering coefficient triangle tensor.} An alternative normalization in \eqref{eq:TB} 
gives 
\begin{equation}\label{eq:Tc}
(\ten T_C)_{ijk} = \begin{cases} \frac{1}{d_i(d_i-1)} & \text{if }i,j,k \text{ form a triangle} \\ 
0 & \text{otherwise.}
\end{cases}
\end{equation}
Note that, if $i,j,k$ form a triangle, then $d_i\geq 2$ and 
hence $(\ten T_C)_{ijk}$ is well defined.
This tensor incorporates 
information 
that is not used 
in (\ref{eq:TB}) and (\ref{eq:Tw})---the number of transitive relationships 
that each node could be potentially involved in---while also 
accounting for the second order structure actually present. 
We refer to (\ref{eq:Tl}) as the 
clustering coefficient triangle tensor because for any $p$ 
we have $({\ten T_C})_p(\bone) = \cvec$, the Watts--Strogatz clustering coefficient vector. We will
return to this property in subsection \ref{ssec:spectral_CC}.

\vspace{.5em}

\textbf{Local closure triangle tensor.} 
The \emph{local closure coefficient} 
\cite{yin2019local}
of node $i$ is defined as
\begin{equation}\label{eq:local_closure}
h_i = \frac{2\triangle(i)}{w(i)},
\end{equation}
where 
\begin{equation}\label{eq:local_closure}
w(i) = \sum_{j\in N(i)}d_j - d_i = \sum_{j\in N(i)}(d_j-1)
\end{equation}
is the number of paths of length two originating from node $i$, and $N(i)$ is the set of neighbours of node $i$.
We may also write $\wvec = A\dvec - \dvec = A^2\bone - A\bone$.
The following result, which is an  
immediate consequence of the definition of $w(i)$, shows that we may 
assume 
$w(i)\neq 0$ when dealing with real-world networks.

\begin{proposition}\label{prop:zero_w}
Let $G = (V,E)$ be an unweighted, undirected and connected graph. 
Then $w(i) = 0$ if and only if all neighbours of node $i$ have degree
equal to one.
Further, if $w(i) = 0$ for some $i$ then 
$G$ is either a path graph with two nodes or 
    a star graph with $n\geq 3$ nodes having $i$ as its centre. 
\end{proposition}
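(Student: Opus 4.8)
The plan is to establish the two claims separately: the equivalence by a sign argument, and the structural consequence by exploiting connectedness.

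First I would dispatch the ``if and only if''. Since $G$ is connected on at least two nodes, every $j\in N(i)$ is incident to the edge $(i,j)$ and hence satisfies $d_j\ge 1$. Therefore each summand in $w(i)=\sum_{j\in N(i)}(d_j-1)$ is nonnegative, and a finite sum of nonnegative numbers vanishes precisely when every term vanishes; this immediately yields $w(i)=0$ exactly when $d_j=1$ for all $j\in N(i)$. (The empty-sum case $d_i=0$ does not arise under our connectedness assumption.)

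Next, assuming $w(i)=0$, so that every neighbour of $i$ has degree one, I would show that $V=\{i\}\cup N(i)$. Take any node $k\ne i$; connectedness provides a walk from $i$ to $k$, whose first edge reaches some $j_1\in N(i)$. Since $d_{j_1}=1$, node $j_1$ is incident to no edge other than $(i,j_1)$, so the walk cannot proceed past $j_1$; hence $k=j_1\in N(i)$. Thus every node other than $i$ is a neighbour of $i$.

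Finally, because all neighbours of $i$ have degree one there are no edges among them, so the edges of $G$ are exactly those joining $i$ to the nodes of $N(i)$; that is, $G$ is a star centred at $i$ with $n=d_i+1$ nodes. Distinguishing the case $d_i=1$, in which $G$ is the path on two nodes, from $d_i\ge 2$, in which $G$ is a star on $n\ge 3$ nodes, completes the argument. I do not anticipate a genuine obstacle here; the only points requiring care are ruling out the degenerate single-vertex / empty-neighbourhood situations via connectedness, and phrasing the walk-termination step cleanly.
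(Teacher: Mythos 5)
Your proof is correct, and the paper in fact offers no proof at all---it states the proposition as ``an immediate consequence of the definition of $w(i)$''---so your argument simply supplies the details the authors left implicit: nonnegativity of the summands $d_j-1$ for the equivalence, then connectedness to force $V=\{i\}\cup N(i)$ and the star/path dichotomy. The only wording to tighten is ``walk'' in the reachability step: take a shortest path (or a simple path) from $i$ to $k$, so that it genuinely cannot continue past a degree-one neighbour, since a walk could backtrack to $i$; with that substitution the argument is airtight.
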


We then define the 
local closure triangle tensor as
\begin{equation}\label{eq:Tl}
(\ten T_L)_{ijk} = \begin{cases} \frac{1}{w(i)} & \text{if }i,j,k \text{ form a triangle}\\
0 & \text{otherwise.}
\end{cases}
\end{equation}
It is easily checked that  ${(\ten T_L)}_p(\bone) = \hvec $ for all $p$. 

\vspace{.5em}

Next, we briefly discuss the main differences, for the purposes of this work, among these four tensorial network representations.

The binary triangle tensor (\ref{eq:TB}) and 
random walk triangle tensor (\ref{eq:Tw})
provide no information concerning the wedges involving each node, and hence 
the consequent  potential for triadic closure. Indeed,  networks that have very different structures from the viewpoint of potential and actual transitive relationships are treated alike. For example, consider the two networks in row $(a)$ of Figure~\ref{fig:multi_toy_networks}, where solid lines are used to represent the actual edges in the network.
The two networks are represented by the same 
tensors in the case of  
(\ref{eq:TB}) 
and 
(\ref{eq:Tw}), but are not equivalent from the viewpoint of transitive relationships. 
Indeed, by closing wedges following the principle underlying the Watts-Strogatz clustering coefficient, in the network on the left 
node $1$ could participate in five more triangles, whilst in the graph on the right it could participate in only two more. 
These are highlighted in Figure \ref{fig:multi_toy_networks}, row $(a)$, using dashed lines. 
On the other hand, the clustering coefficient triangle tensor defined in \eqref{eq:Tc} encodes in its entries the ``potential" for triadic closure of node 1; indeed, for the network on the left it holds that $(\ten T_C)_{123} = (\ten T_C)_{132} = 1/12$, while these entries are  $(\ten T_C)_{123} = (\ten T_C)_{132} =1/6$ for the network on the right. 
These values show that there is a potential for node 1 to be involved in respectively $12$ and $6$ directed triangles.
\begin{figure}
\begin{center} 
\begin{tikzpicture}[scale=.9]
\draw (2.5*360/5: 3.5cm) node (a) {\Large $(a)$};
    \draw (360/5: 1.5cm) node[scale=.7,circle,draw=black,ultra thick,fill=blue!30](1){1};
    \draw (2*360/5: 1.5cm) node[scale=.7,circle,draw=black,ultra thick](2){2};
    \draw (3*360/5: 1.5cm) node[scale=.7,circle,draw=black,ultra thick](3){3};
    \draw (4*360/5: 1.5cm) node[scale=.7,circle,draw=black,ultra thick](4){4};
    \draw (5*360/5: 1.5cm) node[scale=.7,circle,draw=black,ultra thick](5){5};
\path[-, ultra thick] (1)edge[] node[]{} (2) (2)edge[] node[]{} (3) (3)edge[] node[]{}(1)
    (1)edge[] node[]{} (4) (1)edge[] node[]{} (5);
    \path[dashed,thick] (2)edge[] node[]{} (4) (2)edge[] node[]{} (5) (3)edge[] node[]{} (4) (3)edge[] node[]{}(5)
    (4)edge[] node[]{} (5);
    \end{tikzpicture}
    \hspace{2cm}
    \begin{tikzpicture}[scale=.9]
    \draw (360/5: 1.5cm) node[scale=.7,circle,draw=black,ultra thick,fill=blue!30](1){1};
    \draw (2*360/5: 1.5cm) node[scale=.7,circle,draw=black,ultra thick](2){2};
    \draw (3*360/5: 1.5cm) node[scale=.7,circle,draw=black,ultra thick](3){3};
    \draw (4*360/5: 1.5cm) node[scale=.7,circle,draw=black,ultra thick](4){4};
    \draw (5*360/5: 1.5cm) node[scale=.7,circle,draw=black,ultra thick](5){5};
    \path[-, ultra thick] (1)edge[] node[]{} (2) (2)edge[] node[]{} (3) (3)edge[] node[]{}(1)
    (1)edge[] node[]{} (4) (4)edge[] node[]{} (5); 
    \path[dashed,thick] (2)edge[] node[]{} (4) (3)edge[] node[]{} (4) (4)edge[] node[]{} (5);
    \end{tikzpicture}\\[2em]
\begin{tikzpicture}[scale=.9]
    \draw (360/5: 1.5cm) node[scale=.7,circle,draw=black,ultra thick,fill=blue!30](1){1};
    \draw (2*360/5: 1.5cm) node[scale=.7,circle,draw=black,ultra thick](2){2};
    \draw (3*360/5: 1.5cm) node[scale=.7,circle,draw=black,ultra thick](3){3};
    \draw (4*360/5: 1.5cm) node[scale=.7,circle,draw=black,ultra thick](4){4};
    \draw (5*360/5: 1.5cm) node[scale=.7,circle,draw=black,ultra thick](5){5};
\path[-, ultra thick] (1)edge[] node[]{} (2) (2)edge[] node[]{} (3) (3)edge[] node[]{}(1)
    (1)edge[] node[]{} (4) (1)edge[] node[]{} (5);
    \draw (2.5*360/5: 3.5cm) node (b) {\Large $(b)$};
    \end{tikzpicture}
    \hspace{2cm}
    \begin{tikzpicture}[scale=.9]
    \draw (360/5: 1.5cm) node[scale=.7,circle,draw=black,ultra thick,fill=blue!30](1){1};
    \draw (2*360/5: 1.5cm) node[scale=.7,circle,draw=black,ultra thick](2){2};
    \draw (3*360/5: 1.5cm) node[scale=.7,circle,draw=black,ultra thick](3){3};
    \draw (4*360/5: 1.5cm) node[scale=.7,circle,draw=black,ultra thick](4){4};
    \draw (5*360/5: 1.5cm) node[scale=.7,circle,draw=black,ultra thick](5){5};
    \path[-, ultra thick] (1)edge[] node[]{} (2) (2)edge[] node[]{} (3) (3)edge[] node[]{}(1)
    (1)edge[] node[]{} (4) (4)edge[] node[]{} (5); 
    \path[dashed,thick] (1)edge[] node[]{} (5);
    \end{tikzpicture}
\end{center} 
\caption{Example networks with the same number of edges (solid) and triangles. Row $(a)$, left: node 1 can be involved in five more undirected triangles according to the principle underlying the Watts--Strogatz clustering coefficient. These are formed using the dashed edges. Row $(a)$, right: node 1 can only be involved in two more, formed using the dashed edges.  Row $(b)$, left: node 1 cannot be involved in any more triangles, according to the principle underlying the local closure coefficient. Row $(b)$, right: node 1 can only be involved in one more, formed using the dashed edge.}\label{fig:multi_toy_networks}
\end{figure}
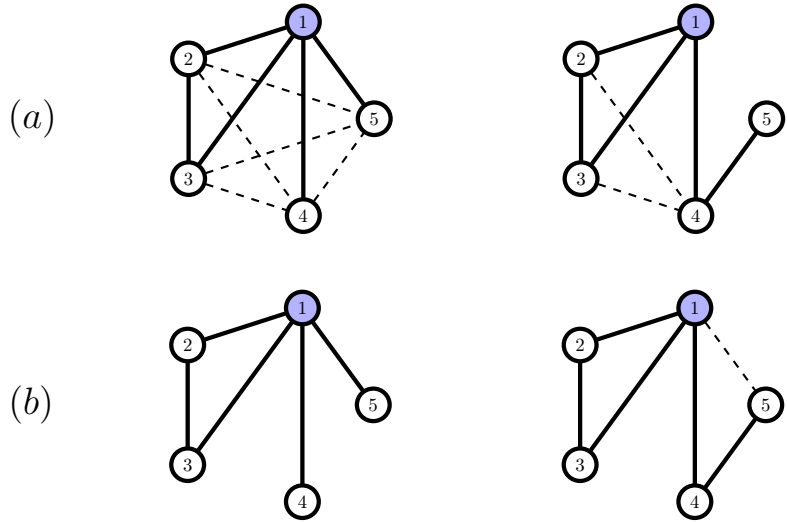

The local closure triangle tensor defined in \eqref{eq:Tl} encodes another type of triadic closure property---the potential of a node to become involved in triangles by connecting to nodes that are at distance two from it. 
In the networks depicted in Figure~\ref{fig:multi_toy_networks}, row $(b)$,  it is clear that no
such triangles can be formed in the network on the left, while there is one that could be formed in the graph on the right (dashed edge). 
For the entries of the associated tensor $\ten T_L$,  the left network in  row $(b)$ of
Figure~\ref{fig:multi_toy_networks}  has $(\ten T_L)_{123} = (\ten T_L)_{132} = 1/2$, and indeed node 1 is participating in both possible directed triangles 
that can be formed according to the principals of local closure. The network on the right  has
$(\ten T_L)_{123} = (\ten T_L)_{132} = 1/3$.

\vspace{.5em}

\subsection{The linear cases: $\alpha = 1$ or $p=1$}
The map $\map$ defined in \eqref{eq:map} becomes linear for particular choices of $p$ and $\alpha$. 
One case arises when $\alpha =1$, whence it reduces to a standard matrix-vector product,
$
\map(\xvec) = M\xvec
$,
and \eqref{def:map} boils down to a linear eigenvector problem \eqref{eq:eig_linear}. 
Using the particular choices of $M$ described in the previous subsection, it then follows that our model includes as a special case standard eigenvector centrality and PageRank centrality. 

Now let $\alpha\in[0,1)$ and $p=1$. 
Then the mapping $\ten T_p:\RR^n \to \RR^n$ also becomes linear; indeed, entrywise it
becomes 
\[
\ten T_1(\xvec)_i = \frac 12 \sum_{j,k=1}^n\ten T_{ijk}x_k+\ten T_{ijk}x_j =\frac 12 \Big\{ \sum_{j=1}^n (\sum_{k=1}^n \ten T_{ikj})x_j +  \sum_{j=1}^n (\sum_{k=1}^n \ten T_{ijk}) x_j \Big\} 
\]
and $\ten T_1(\b x)$ reduces to the product between the vector $\b x$  and the matrix with entries $\frac 12 (\sum_k \ten T_{ijk}+\ten T_{ikj})$. 
In particular, if the tensor $\ten T$ is symmetric with respect to the second and third modes, i.e.\ $\ten T_{ijk} = \ten T_{ikj}$ for all $j,k$, it follows that 
\[
\ten T_1(\xvec)_i = \sum_{j=1}^n (\sum_{k=1}^n \ten  T_{ijk}) x_j\, .
\]
Note that this is the case for all the tensors defined in subsection~\ref{ssec:choice_MT}. 

We now explicitly compute $(\sum_k \ten T_{ijk})$ for some of the tensors $\ten T$ presented in subsection~\ref{ssec:choice_MT}. 
If $\ten  T = \ten T_B$ is the binary triangle tensor 
in \eqref{eq:TB}, it follows that  
\begin{equation}\label{eq:TA}
    \sum_{k=1}^n (\ten T_B)_{ijk} 
= (A\circ A^2)_{ij}
\end{equation}
and hence 
$$
(\ten T_B)_1(\xvec) = (A\circ A^2)\xvec.
$$
Overall, the map $\map$ then acts on a vector $\xvec$ as follows
$$
\map(\xvec) = \alpha M\xvec + (\ten T_B)_1(\xvec) = \Big( \alpha A + (1-\alpha) (A\circ A^2) \Big) \xvec ,
$$
and so the solution to the constrained eigenvector problem \eqref{eq:eig_gen} is the 
Perron--Frobenius  eigenvector of the matrix $\alpha A + (1-\alpha) (A\circ A^2)$. 
This has a flavour of the work in 
\cite{benson2016higher}, 
where the use of $A\circ A^2$ is advocated as a means to 
incorporate motif counts involving second order structure.
Other choices of the tensor $\ten T$ yield different eigenproblems. 
For example, when $\ten T = \ten T_C$ in (\ref{eq:Tc}) we have 
$$
\sum_{k=1}^n (\ten T_C)_{ijk} 
= 
\begin{cases}
\frac{(A\circ A^2)_{ij}}{d_i(d_i-1)} & \text{if } d_i\geq 2 \\
0 & \text{otherwise}
\end{cases}
$$
and hence \eqref{eq:map} becomes
\[
\map(\xvec) = \alpha M\xvec + (1-\alpha) (\ten T_C)_1(\xvec)  = \Big( \alpha A + (1-\alpha)(D^2-D)^\dagger (A\circ A^2) \Big) \xvec,
\]
where $^\dagger$ denotes the Moore-Penrose pseudo-inverse. 
If we let $\ten T  = \ten T_L$, as defined in \eqref{eq:Tl},  we obtain 
\begin{equation}\label{eq:tmp}
    \sum_{k=1}^n (\ten T_L)_{ijk} 
= \begin{cases}
\frac{(A\circ A^2)_{ij}}{w(i)} & \text{if } d_j\geq 2 \\
0 & \text{otherwise.}
\end{cases} \, 
\end{equation}
Note that in formula \eqref{eq:tmp} 
it is possible to have $\sum_{k=1}^n (\ten T_L)_{ijk}=0$ even in the case where $d_j\geq 2$. This is because, as observed in Proposition  \ref{prop:zero_w}, there are cases where $w(i)>0$ but $i$ does not form any triangle and thus $(A\circ A^2)_{ij}=0$, for all $j$ with $d_j\geq 2$.

Using \eqref{eq:tmp} we obtain
$$
(\ten T_L)_1(\xvec) = W^{\dagger}(A\circ A^2)\xvec,
$$
where $W = \text{diag}(w(1),\ldots,w(n))$. 
The eigenvector problem \eqref{eq:eig_gen} then becomes
$$
\map_{1,\alpha}(\xvec) = \alpha M\xvec + (1-\alpha) (\ten T_L)_1(\xvec) = \Big( \alpha A + (1-\alpha) W^\dagger (A\circ A^2) \Big) \xvec = \lambda\, \xvec.
$$

\subsection{Spectral clustering coefficient: $\alpha = 0$ }
\label{ssec:spectral_CC}

While the choice of $\alpha=1$ yields a linear and purely first order map, the case $\alpha=0$ corresponds to a map that only accounts for second order node relations. 
In particular, this map allows us to define 
spectral, and hence 
mutually reinforcing, versions of 
the Watts--Strogatz 
 clustering 
coefficient \eqref{eq:WSCC}
and the 
local closure coefficient \eqref{eq:local_closure}, where the power mean parameter $p$ in (\ref{eq:Tp}) 
controls how the coefficients of neighbouring nodes are combined.
We therefore make the following definition. 
\begin{definition} \label{def:spec_clus}
Let $\ten T\in\Rnnn$ be an entrywise nonnegative cubic tensor associated with the network.
The \emph{spectral clustering coefficient} of node  $i$ is the 
$i$th entry of the vector $\xvec\geq 0$ which solves the eigenvalue problem 
(\ref{eq:eig_gen}) with $\alpha = 0$ in (\ref{eq:map}); that is, 
\begin{equation}\label{eq:spectral_clustering_coeff}
    \ten T_p(\xvec) = \lambda \xvec\, .
\end{equation}
The solution for 
  $\ten T = \ten T_C\in\Rnnn$ in~\eqref{eq:Tc} will be referred to as the 
  \emph{spectral Watts--Strogatz clustering coefficient}, and 
  the solution for 
  $\ten T = \ten T_L\in\Rnnn$ 
  in~\eqref{eq:Tl}
  will be referred to as the \emph{spectral local closure coefficient}.
\end{definition}

We emphasize that, as for standard first order coefficients based on matrix eigenvectors, the spectral clustering coefficient  \eqref{eq:spectral_clustering_coeff} is invariant under node relabeling. Indeed, if $\ten T$ is any tensor associated with the network as in subsection~\ref{ssec:choice_MT} and $\pi:V\to V$ is a relabeling of the nodes, i.e., a permutation, then the tensor associated to the relabeled graph is $\tilde{\ten T}_{ijk}=\ten T_{\pi(i)\pi(j)\pi(k)}$ and thus $\xvec$ solves \eqref{eq:spectral_clustering_coeff} if and only if $\tilde{\ten T}_p(\tilde \xvec) = \lambda \tilde \xvec$, where $\tilde x_{\pi(i)}=x_i$, for all $i \in V$. Of course, the same relabeling invariance property carries over to the general setting $\alpha\neq 0$.

 \begin{figure}[t]
     \centering
     \includegraphics[width=\textwidth]{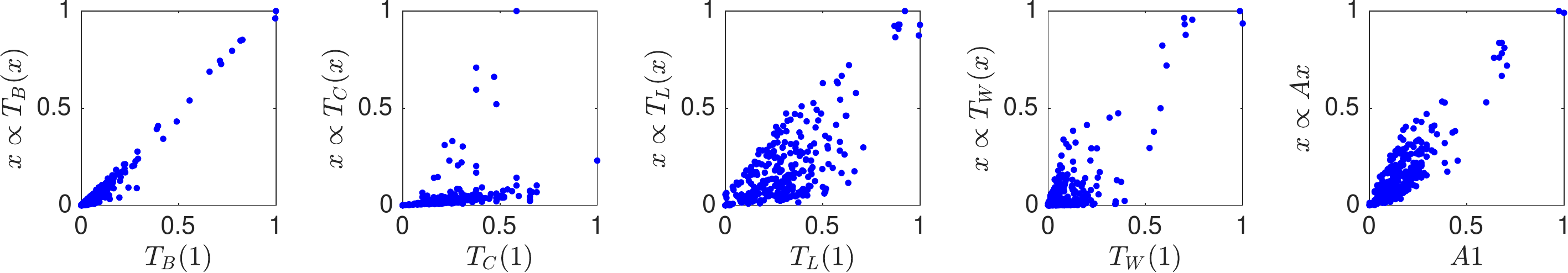}
     \caption{C.\ Elegans neural network data. First four panels: scatter plots showing correlation of static clustering coefficients vs $H$-eigenvector coefficients  for  four choices of the tensor $\ten T$, i.e., solutions to \eqref{eq:spectral_clustering_coeff} for $p=0$. The rightmost panel scatter plots degree centrality $\dvec = A\bone$ vs standard eigenvector centrality.}
     \label{fig:corr_centrality}
 \end{figure}
 
 Note  that if node $i$ does not participate in any triangle, then the summation describing the corresponding entry in $\ten T_p(\xvec)$ is empty, and thus the spectral clustering coefficient for this node is zero, as expected. Moreover, the converse is also true, since $\ten T\geq 0$ and $\xvec\geq 0$.
On the other hand, since the spectral clustering coefficient $\b x$ is defined via an eigenvector equation for $\ten T_p$, it follows that it cannot be unique as it is defined only up to a positive scalar multiple. 
Indeed, we have $\ten T_p(\theta \b x) = \theta \ten T_p(\b x)$ for any $\theta \geq 0$. Hence, when $\ten T = \ten T_C$, unlike the standard Watts-Strogatz clustering coefficient, it is no longer true that a unit spectral clustering coefficient identifies nodes that participate in all possible triangles. 
However, we will see in the next section that once we have a solution $\b x$ of  \eqref{eq:spectral_clustering_coeff} any other solution must be a positive multiple of $\b x$. More precisely, we will show that under standard connectivity assumptions on the network, the spectral clustering coefficient and, more generally, the solution to \eqref{eq:eig_gen} is unique up to  positive scalar multiples. This fosters the analogy with the linear setting \eqref{eq:eig_linear}. 
Therefore, it is meaningful to normalize the solution to \eqref{eq:eig_gen} and compare the size of its components to infer information on the relative importance of nodes within the graph.

 The vector $\ten T_p(\bone)$, which is independent of the choice of $p$, defines a ``static'' counterpart of the spectral clustering coefficient obtained as the Perron--Frobenius eigenvector $\xvec$ of $\ten T_p$. This may be viewed as a second order analogue of the dichotomy between degree centrality and eigenvector centrality, the former being defined as $A\bone$ and the latter as the Perron--Frobenius eigenvector of $A$.  As in the first order case, even though the spectral coefficient $\xvec \propto \ten T_p(\xvec)$ carries global information on the network while the static version $\ten T_p(\bone)$ is highly local, the two measures can be correlated. An example of this phenomenon is shown in Figure \ref{fig:corr_centrality}, which scatter plots $\ten T_p(\bone)$ against $\ten T_0(\xvec)$, for different choices of $\ten T$, on the unweighted version\footnote{We have modified the original weighted network by assigning weight one to every edge.} of the neural network of C. Elegans compiled by Watts and Strogatz in \cite{WS98}, from original experimental data by White et al. \cite{white_celegans}; see Table~\ref{tab:data} for further details of this network.

We also remark that our general definition of spectral clustering coefficient
in Definition~\ref{def:spec_clus}
  includes in the special case 
 $p\to 0$
 the 
 \textit{Perron $H$-eigenvector} of the tensor $\ten T$ \cite{gautier2019unifying}. Indeed, it is easy to observe that the change of variable $\b y^2 = \xvec$ yields
$$
\ten T_0(\xvec)=\lambda \xvec \, \iff\,  \ten T\b y \b y = \lambda \b y^2,
$$
where $\ten T\b y \b y$ is the tensor-vector product $(\ten T\b y \b y)_i = \sum_{jk} \ten T_{ijk}y_jy_k$.  This type of eigenvector has been used in the context of hypergraph centrality; see, e.g., \cite{B19}.

The choice of the tensor $\ten T$ affects the way the 
triangle structure is
incorporated in our measure, as we have previously illustrated in the small toy networks in Figure~\ref{fig:multi_toy_networks}. 
Examples of the differences that one may obtain on real-world networks are shown in  Figures \ref{fig:karate} and \ref{fig:correlation_clustering_coeff}. A description of the datasets used in those figures is provided in Section~\ref{sec:numerical} and in Table~\ref{tab:data}. 
Figure~\ref{fig:karate} displays the {\sc Karate} network and highlights the  ten nodes which score the highest according to the spectral clustering coefficient for different choices of $\ten T$. 
In this experiment we select $p=0$, and thus we are actually computing the Perron $H$-eigenvector of the corresponding tensors. The size of each of the top ten nodes in Figure~\ref{fig:karate} is proportional to their clustering coefficients. In Figure~\ref{fig:correlation_clustering_coeff}, instead, we display how the $H$-eigenvectors corresponding to different triangle tensors correlate with the degree of the nodes for four real-world networks; 
We group nodes by logarithmic binning of their degree and plot the average degree versus the average clustering coefficient in each bin.
As expected, the Watts--Strogatz spectral clustering coefficient may decrease when the degree increases, in contrast with other choices of the triangle tensor. A similar phenomenon is observed, for example, in \cite{yin2019local}. 

\begin{figure}[t]
\centering
\includegraphics[width=\textwidth]{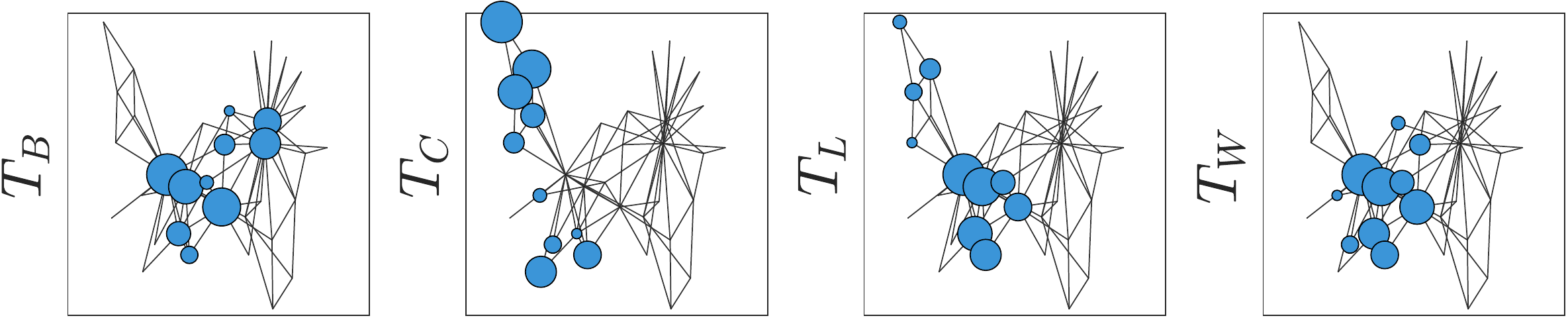}
\caption{Top 10 nodes identified on the \textit{Karate Club network} by different tensor $H$-eigenvector  clustering coefficients, solution to $\ten T_p(\xvec) = \lambda \xvec$, for $p=0$, and the four triangle tensor choices $\ten T \in \{\ten T_B, \ten T_C, \ten T_L, \ten T_W\}$. }\label{fig:karate}
\end{figure}
\begin{figure}[t]
    \centering
    \includegraphics[width=\textwidth]{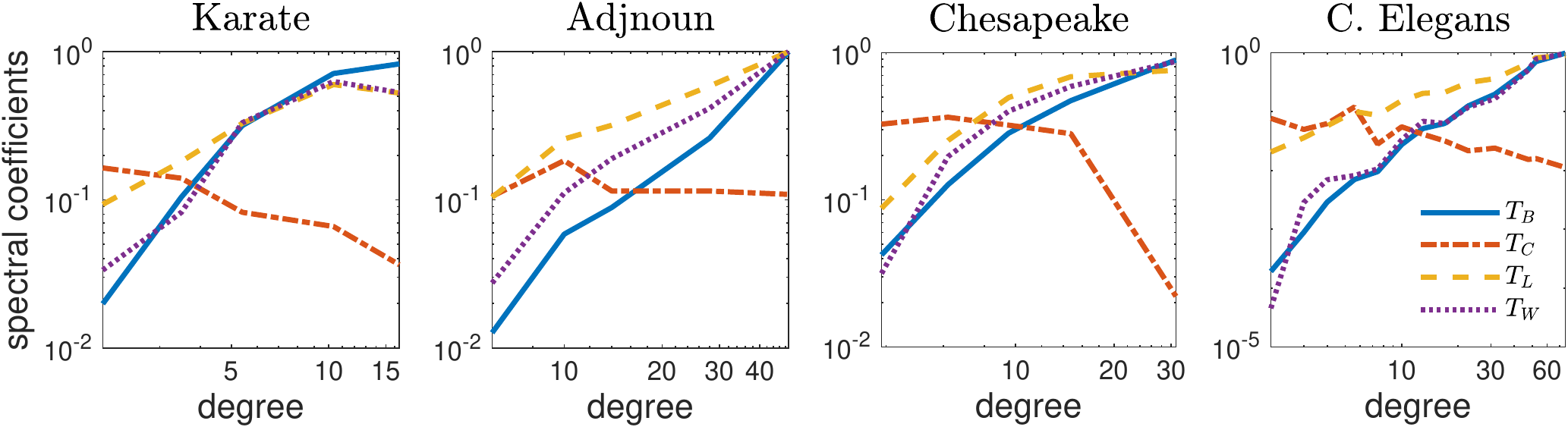}
    \caption{Correlation of different tensor $H$-eigenvector  clustering coefficients with node degree on four networks. We group nodes by logarithmic binning of their degree and plot the average degree versus the average clustering coefficient in each bin.}
    \label{fig:correlation_clustering_coeff}
\end{figure}

In the next section we discuss existence and uniqueness, up to scalar multiples, of a solution to \eqref{eq:eig_gen}. We also describe a power-iteration algorithm for its computation.

\section{Existence, uniqueness, maximality and computation}
\label{sec:theory}

\def\mx{B}

For reasons of clarity and utility, the definitions in Section~\ref{sec:model} 
were made under
the assumption that the original graph is undirected.
Second order features can, of course, be incorporated in the directed case. But the 
range of possibilities to be considered (for example, 
accounting for each type of directed triangle) is much greater and   
the interpretation of the resulting measures becomes less clear cut.
 However, just as in the standard matrix setting,
 in terms of studying existence, uniqueness and computational issues, very little is lost 
 by moving to the unsymmetric case.
 Hence, Definition~\ref{def:A_M} and Theorem~\ref{thm:theory} below are stated for general $M$ and $\ten T$.
 In Lemma~\ref{lem:graph_of_M} we then clarify that the results apply to the measures
 introduced in Section~\ref{sec:model}.
 
We begin by discussing the linear case
where 
$\alpha=1$ or $p=1$, so that 
the nonnegative operator $\map:\RR^n\to\RR^n$ is an entrywise nonnegative matrix $\mx$.
Here, results from Perron--Frobenius theory provide conditions on $\map$ that guarantee existence of a solution to \eqref{eq:eig_gen} and computability of this solution via the 
 classical power method. 
These conditions are typically based on structural properties of  $\map$ and of the associated graph. 
We review below some of the best known and most useful results from this theory. 

First, given the entrywise nonnegative matrix $\mx\in \mathbb R^{n\times n}$, let $G_\mx$ be the adjacency graph of $\mx$, with nodes in $\{1,\dots,n\}$ and such that the edge $i\to j$ exists in $G_\mx$ if and only if $\mx_{ij}>0$. 
Now, recall that  a graph is said to be \emph{aperiodic} if the greatest common divisor of the lengths of all cycles in the graph is one. 
Also, the matrix $\mx$ is \emph{primitive} if and only if there exists an integer $k\geq 1$ such that $\mx^k>0$, and, moreover,  
 $\mx\geq 0$ is {\rm primitive} if and only if $G_\mx$ is aperiodic. 

It is well known that when $G_\mx$ is strongly connected, then there exists a unique (up to  multiples) eigenvector of $\mx$, and such vector is entrywise positive.
Moreover, this eigenvector is maximal, since the corresponding eigenvalue is the spectral radius of $\mx$ and, if $G_\mx$ is aperiodic, the power method iteration $\xvec_{k+1} = \mx\xvec_k / \|\mx\xvec_k\|$ converges to it for any starting vector $\xvec_0 \in \RR^n$.

In the general case, 
we will appeal to nonlinear Perron--Frobenius theory to show that 
 the properties of existence, uniqueness and maximality of the solution to \eqref{eq:eig_gen} carry over to the general nonlinear setting almost unchanged, and to 
 show that an efficient iteration can be used to compute this solution.
We first note that for any $\alpha\in [0,1]$, any $p\in \RR$ and any $\theta > 0$ we have
\[
\map(\theta \b x) = \alpha M (\theta \b x) + (1-\alpha)\b T_p(\theta \b x) = \theta \map (\b x)\, ,
\]
thus if $\b x\geq 0$ solves \eqref{eq:eig_gen}, then any positive multiple of $\b x$ does as well. Therefore, as for the linear case, uniqueness can only be defined up to scalar multiples. 
We continue by introducing the graph of $\map$. 

\begin{definition}\label{def:A_M}
Given a matrix $M\in\Rnn$ and a cubic tensor $\ten T\in\Rnnn$, both assumed to be nonnegative, we define the {\rm adjacency graph $G_\map$ of} $\map$ in \eqref{eq:map} as the pair $G_\map=(V,E_\map)$ where $V=\{1,\dots,n\}$ and, for all $i,j\in V$,  $(i,j)\in E_\map$ if and only if $(A_\map)_{ij}=1$, where $A_\map$ is the adjacency matrix entrywise defined as
\begin{equation*}
(A_\map)_{ij}=\begin{cases}1 & \text{if }\,\alpha M_{ij} + (1-\alpha)\sum_{k=1}^n(\ten T_{ijk}+\ten T_{ikj})>0\\
0 & \text{otherwise}
\end{cases}
\end{equation*}
\end{definition}

We now state and prove our main theorem.

\begin{theorem}\label{thm:theory}
Given the nonnegative matrix $M\in \RR^{n\times n}$ and the nonnegative tensor $\ten T\in \RR^{n\times n\times n}$, let  $\map$ be defined as in \eqref{eq:map} and let $G_\map$ be its adjacency graph, as in Definition \ref{def:A_M}. If $G_\map$ is strongly connected, then
\begin{enumerate}
    \item  There exists a unique (up to multiples) positive eigenvector of $\map$, i.e. a unique positive solution of \eqref{eq:eig_gen}.
    \item The positive eigenvector of $\map$ is maximal, i.e., its eigenvalue is $\rho(\map) = \max \{|\lambda| : \map(\b x) = \lambda \b x\}$.
    \item If $\b x$ is any nonnegative eigenvector $\map(\b x)=\lambda \b x$ with some zero entry, then $\lambda<\rho(\map)$.
    \end{enumerate}
If moreover $G_\map$ is aperiodic, then
\begin{enumerate}
    \item[(iv)] For any starting point $\b x_0> 0$, the nonlinear power method
    $$
    \left\{\begin{array}{l}\b y_{k+1} = \alpha M\b x_k + (1-\alpha) \ten T_p(\b x_k) \\
    \b x_{k+1} = \b y_{k+1}/\|\b y_{k+1}\|
    \end{array}\right.$$
    converges to the positive eigenvector of $\map$. Moreover, for all $k=0,1,2,...$ it holds
    \begin{equation}\label{eq:CW}
            \min_{i=1,\dots,n}\frac{(\b y_k)_i}{(\b x_k)_i} \leq  \min_{i=1,\dots,n}\frac{(\b y_{k+1})_i}{(\b x_{k+1})_i} \leq \rho(\map) \leq \max_{i=1,\dots,n}\frac{(\b y_{k+1})_i}{(\b x_{k+1})_i} \leq \max_{i=1,\dots,n}\frac{(\b y_k)_i}{(\b x_k)_i}
    \end{equation}
    with both the left and the right hand side sequences converging to $\rho(\map)$ as $k\to \infty$.
\end{enumerate}
\end{theorem}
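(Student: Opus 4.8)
The plan is to recognise $\map$ as a continuous, order-preserving, positively homogeneous self-map of the cone $\RR^n_{\geq 0}$ and then invoke the nonlinear Perron--Frobenius theory for such maps, after checking that strong connectivity and aperiodicity of $G_\map$ are precisely the irreducibility and primitivity hypotheses that theory needs. The structural facts are immediate: since $M\geq 0$, $\ten T\geq 0$, and the power mean $\mu_p(a,b)$ is nondecreasing in $|a|$ and in $|b|$ and satisfies $\mu_p(\theta a,\theta b)=\theta\,\mu_p(a,b)$ for $\theta\geq 0$, the map $\map$ is continuous on $\RR^n_{\geq 0}$, order-preserving ($\b 0\leq\b x\leq\b y\Rightarrow\map(\b x)\leq\map(\b y)$) and homogeneous of degree one. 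Moreover $\mu_p$ is smooth on $\{a>0,b>0\}$, so $\map$ is $C^1$ on the open cone, and a direct computation gives $\partial\map(\b x)_i/\partial x_\ell>0$ exactly when $\alpha M_{i\ell}+(1-\alpha)\sum_k(\ten T_{i\ell k}+\ten T_{ik\ell})>0$, i.e.\ exactly when $(i,\ell)\in E_\map$. Hence for every $\b x>0$ the Jacobian $D\map(\b x)$ is a nonnegative matrix whose adjacency graph is $G_\map$; in particular, strong connectivity of $G_\map$ makes $D\map(\b x)$ irreducible on all of $\RR^n_{>0}$, and aperiodicity makes it primitive there. Finally, strong connectivity forces every row of $A_\map$ to be nonzero, which by the formula for $\map$ gives $\map(\RR^n_{>0})\subseteq\RR^n_{>0}$.

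Next, items (i) and (ii). Existence of a positive eigenvector, together with the identification of its eigenvalue as $\rho(\map)$, is the nonlinear Perron--Frobenius existence/maximality statement for order-preserving homogeneous maps whose Jacobian graph is strongly connected, and I would quote it from the cited references (e.g.\ \cite{gautier2019unifying} and the literature therein), the hypotheses having been verified above; self-contained, it follows from a Brouwer fixed-point argument applied to the normalised map with a vanishing homogeneous perturbation $\map(\b x)+\ep\norm{\b x}_1\b 1$ (whose role is only to keep the denominator positive on the closed simplex), producing a nonnegative eigenpair $(\lambda,\b u)$ with $\lambda\geq 0$, after which irreducibility of $D\map$ on the interior rules out a boundary maximiser and forces $\b u>0$, and then $\map(\b 1)>0$ forces $\lambda>0$. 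For maximality I would use the elementary bound $|\map(\b x)|\leq\map(|\b x|)$ entrywise (triangle inequality on the $\alpha M\b x$ part; $\mu_p(a,b)=\mu_p(|a|,|b|)$), which for any real eigenpair $(\lambda,\b x)$ gives $|\lambda|\,|\b x|\leq\map(|\b x|)$, and then, rescaling the positive eigenvector $\b u$ so that $\b u\geq|\b x|$ with equality in some coordinate, monotonicity yields $|\lambda|\leq$ the eigenvalue of $\b u$; since that eigenvalue is itself of this form, it equals $\rho(\map)$.

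For uniqueness (item 1) and item (iii) I would argue by strict monotonicity along a segment. First, for any $\b x>0$ and any positive eigenpair $(\mu,\b v)$ one has $\min_i\map(\b x)_i/x_i\leq\mu\leq\max_i\map(\b x)_i/x_i$ (rescale $\b v$ against $\b x$ to touch in one coordinate and use monotonicity there, and symmetrically), so two positive eigenvectors share the same eigenvalue. Now if $\map(\b x)=\lambda\b x$ with $\b x\geq 0$, $\b x\neq \b u$, rescale $\b u$ (eigenvalue $\rho(\map)$) so that $\b u\geq\b x$ with equality at some index $i_0$; monotonicity gives $\rho(\map)\geq\lambda$, and if equality held then $\rho(\map)(\b u-\b x)=\map(\b u)-\map(\b x)$ would vanish at $i_0$. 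Writing the $i_0$-th component of $\map(\b u)-\map(\b x)$ as $\int_0^1\langle\nabla\map_{i_0}((1-t)\b x+t\b u),\b u-\b x\rangle\,dt$ --- the segment lies in $\RR^n_{>0}$ for $t\in(0,1]$ since $\b u>0$ --- and using the positivity pattern of the Jacobian shows that $V\setminus\mathrm{supp}(\b u-\b x)$ is a nonempty proper subset of $V$ closed under out-neighbours in $G_\map$, contradicting strong connectivity. This forces $\b u=\b x$ when both are positive (item 1) and $\lambda<\rho(\map)$ when $\b x$ has a zero entry (item (iii)). For item (iv): since $\b x_0>0$ and $\map$ preserves $\RR^n_{>0}$, the whole orbit stays in the interior, where $\map$ is nonexpansive for the Hilbert projective metric; primitivity of the Jacobian along the (relatively compact) orbit then yields that some fixed power $\map^N$ is a strict Hilbert contraction near the closure of the orbit, so $\b x_k\to\b u$. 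The bounds \eqref{eq:CW} I would prove directly: from $\b y_{k+1}=\map(\b x_k)\geq\big(\min_i(\b y_{k+1})_i/(\b x_k)_i\big)\b x_k$, applying $\map$ and using monotonicity and homogeneity gives $\b y_{k+2}\geq\big(\min_i(\b y_{k+1})_i/(\b x_k)_i\big)\b y_{k+1}$, so $k\mapsto\min_i(\b y_{k+1})_i/(\b x_k)_i$ is nondecreasing; symmetrically the max-ratio is nonincreasing; the first Collatz--Wielandt inequality sandwiches both between them and $\rho(\map)$; and since $\b x_k\to\b u\in\RR^n_{>0}$ while $\b x\mapsto\min_i\map(\b x)_i/x_i$ and $\b x\mapsto\max_i\map(\b x)_i/x_i$ are continuous at $\b u$ with common value $\rho(\map)$, both monotone sequences converge to $\rho(\map)$.

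The main obstacle is the convergence assertion in item (iv): nonexpansiveness of $\map$ in the Hilbert metric is routine, but upgrading some power of $\map$ to a \emph{strict} contraction --- exactly what prevents the two sequences in \eqref{eq:CW} from converging to different limits --- is the genuinely nonlinear step, and it is here that primitivity (aperiodicity of $G_\map$) and the cited nonlinear Perron--Frobenius machinery are indispensable; concretely one differentiates the Hilbert-contraction estimate along a path and uses primitivity of the Jacobian, or quotes a ready-made theorem. A secondary technical point, sidestepped above by the vanishing perturbation and by running the smoothness/irreducibility arguments only on the open cone, is that $\map$ need not be differentiable on $\partial\RR^n_{\geq 0}$ and can vanish identically on faces of it (for example when $\alpha=0$, $\ten T=\ten T_B$ and $p\leq 0$), so the normalised iteration is not even continuous up to the boundary of the simplex and degenerate boundary eigenpairs with eigenvalue zero genuinely occur.
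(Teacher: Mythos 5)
Your proposal follows essentially the same route as the paper: both reduce the theorem to nonlinear Perron--Frobenius theory for order-preserving, degree-one homogeneous self-maps of the nonnegative cone, and both extract the key structural hypothesis in exactly the same way, namely by computing
$\frac{\partial}{\partial x_j}\map(\b y)_i = \alpha M_{ij}+(1-\alpha)\,y_j^{p-1}\sum_k(\ten T_{ijk}+\ten T_{ikj})\,\mu_p(y_j,y_k)^{1-p}$
at $\b y>0$ and observing that the zero pattern of this Jacobian is $A_\map$, so that strong connectivity and aperiodicity of $G_\map$ become irreducibility and primitivity of $D\map$ on the open cone. The difference is one of packaging: the paper verifies these hypotheses and then cites the relevant black boxes (Lemmens--Nussbaum for existence, uniqueness, maximality and convergence of the normalized iterates; Gaubert--Gunawardena for positivity; Gautier--Tudisco--Hein for the Collatz--Wielandt chain \eqref{eq:CW}), whereas you sketch self-contained proofs of each ingredient. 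Your segment/support argument for uniqueness and item (iii), your monotonicity derivation of \eqref{eq:CW}, and your Hilbert-metric treatment of (iv) are standard and correct reconstructions of what those citations contain.

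One step deserves a caveat. For the existence of a \emph{strictly positive} eigenvector the paper does not argue via the interior Jacobian at all: it invokes Gaubert--Gunawardena's theorem, whose hypothesis is checked with the test vectors $\b y_j(\beta)=\bone+(\beta-1)\bone_j$ and the divergence of $\map(\b y_j(\beta))_i$ as $\beta\to\infty$ whenever $(i,j)\in E_\map$. Your fallback phrase ``irreducibility of $D\map$ on the interior rules out a boundary maximiser'' is an assertion, not an argument: the Jacobian at a boundary eigenvector need not exist or be irreducible there, and the naive support-propagation argument (that $u_i=0$ together with $\map(\b u)_i=0$ forces $u_j=0$ for every out-neighbour $j$ of $i$) fails precisely when $p\le 0$, because then $\mu_p(a,0)=0$ for every $a$, so a vanishing term $\ten T_{ijk}\,\mu_p(u_j,u_k)$ does not force $u_j=0$ --- exactly the boundary degeneracy you yourself flag in your closing paragraph. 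Since you offer the citation route as the primary one, this is a repairable soft spot rather than a fatal gap; but a genuinely self-contained proof of positivity needs a growth condition of Gaubert--Gunawardena type (or an equivalent super-eigenvector construction), not interior irreducibility alone.
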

\begin{proof}
The proof combines several results from nonlinear Perron--Frobenius theory.

First, note that $\map$ is homogeneous of degree one and order preserving. Indeed, if $\b x\geq \b y \geq 0$ entrywise, then it is easy to verify that 
$$
\map(\b x) = \alpha M\b x + (1-\alpha)\ten T_p(\b x) \geq \alpha M\b y + (1-\alpha)\ten T_p(\b y) = \map(\b y) \geq 0 \, .
$$
It follows that 
$\map$ has at least one entrywise nonnegative eigenvector that corresponds to the eigenvalue $\lambda = \rho(\map)$ (see, e.g., \cite[Theorem 5.4.1]{lemmens_nussbaum}).

Next, recall that $\bone_j$ denotes the $j$th vector of the canonical basis of $\Rn$. 
Now let $\b y_j(\beta) =\bone + (\beta -1)\bone_{j}$ be the vector whose $j$th component is the variable $\beta\in \RR$ while all the other entries are equal to one. 
Thus note that if  ${A_{\map}}_{ij}=1$, then $\lim_{\beta\to\infty}\map(\b y_j(\beta)_i)=\infty$. 
Since $G_\map$ is strongly connected,  \cite[Theorem 1]{gaubert} implies that  $\map$ has at least one entrywise  positive eigenvector $\b u>0$ such that $\map(\b u) = \widetilde \lambda \b u$, with $\widetilde \lambda  >0$.

Third, we show uniqueness and maximality. Note that for any positive vector $\b y>0$ and any $p\geq 0$ we have that if $G_\map$ is strongly connected then the Jacobian matrix of $\map$ evaluated at $\b y$ is irreducible. In fact 
$$
\frac{\partial}{\partial x_j} \map(\b y)_i = \alpha M_{ij} + (1-\alpha)y_j^{p-1} \sum_k (\ten T_{ijk} + \ten T_{ikj}) \mu_p(y_j,y_k)^{1-p} \, .
$$
Therefore, \cite[Theorem 6.4.6]{lemmens_nussbaum} 
implies that $\b u$ is the unique positive eigenvector of $\map$. Moreover, \cite[Theorem 6.1.7]{lemmens_nussbaum} implies that for any other nonnegative eigenvector $\b x\geq 0$ with $\map(\b x) = \lambda \b x$ we have $\lambda < \rho(\map)$. As there exists at least one nonnegative eigenvector corresponding to the spectral radius, that must be $\b u$ and we deduce that $\widetilde \lambda  = \rho(\map)$.  

This proves points $(i)-(iii)$. For point $(iv)$, we note that if $G_\map$ is aperiodic then $A_\map$ is primitive and this implies that the Jacobian matrix of $\map$ evaluated at $\b u>0$ is primitive as well. Thus Theorem 6.5.6 and Lemma 6.5.7 of \cite{lemmens_nussbaum} 
imply that the normalized iterates of the homogeneous and order preserving map $\map$ converge to $\b u$. Finally, \cite[Theorem 7.1]{multiPF} proves the sequence of inequalities in \eqref{eq:CW} and the convergence of both the sequences 
$$
\alpha_k =  \min_{i=1,\dots,n}\frac{\map(\b x_k)_i}{(\b x_k)_i} \quad \text{and} \quad  \beta_k = \max_{i=1,\dots,n}\frac{\map(\b x_k)_i}{(\b x_k)_i} 
$$
towards the same limit; $\alpha_k$ and $\beta_k$ tend to 
$\rho(\map)$ as $k \to \infty$.
\end{proof}

We emphasize that because the mapping  $\map$ of Theorem \ref{thm:theory} is defined for an arbitrary nonnegative matrix $M$ and nonnegative tensor $\ten T$, the graph $G_\map$ in that theorem may be directed. 
The next lemma shows that 
when  $M$ and $\ten T$ are defined 
as  in 
subsection~\ref{ssec:choice_MT} 
the graph $G_\map$ coincides with the underlying network. 
Thus, for the undirected case and with any of the choices in 
subsection~\ref{ssec:choice_MT},  
Theorem~\ref{thm:theory} applies   
whenever the original graph is connected.

\begin{lemma}\label{lem:graph_of_M}
Let $\alpha\neq 0$ and $M$ and $\ten T$ be defined according to any of the choices in  subsection~\ref{ssec:choice_MT}. Then $M$ and $A_\map$ have the same sparsity pattern; that is,  $M_{ij}>0$ if and only if $(A_\map)_{ij}=1$. 
\end{lemma}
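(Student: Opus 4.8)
The plan is to show both implications of the sparsity-pattern equivalence $M_{ij}>0 \iff (A_\map)_{ij}=1$ by inspecting the definition of $A_\map$ in Definition~\ref{def:A_M}, namely that $(A_\map)_{ij}=1$ exactly when $\alpha M_{ij} + (1-\alpha)\sum_{k}(\ten T_{ijk}+\ten T_{ikj})>0$. Since $\alpha>0$, $M\geq 0$ and $\ten T\geq 0$, the forward implication is immediate: if $M_{ij}>0$ then the first summand $\alpha M_{ij}$ is strictly positive while the rest of the expression is nonnegative, so $(A_\map)_{ij}=1$. This direction does not even use the structure of $M$ and $\ten T$; it only uses $\alpha\neq 0$ and nonnegativity.

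For the reverse implication, suppose $(A_\map)_{ij}=1$ but $M_{ij}=0$. Then we must have $\sum_{k}(\ten T_{ijk}+\ten T_{ikj})>0$, so there is some $k$ with $\ten T_{ijk}>0$ or $\ten T_{ikj}>0$. Here I would use the specific form of the tensors in subsection~\ref{ssec:choice_MT}: for each of $\ten T_B$, $\ten T_W$, $\ten T_C$, $\ten T_L$, a nonzero entry $\ten T_{ijk}$ (or $\ten T_{ikj}$) occurs only when $i,j,k$ form a triangle, which in particular forces $(i,j)\in E$, i.e. $A_{ij}=1$. Since each admissible choice of $M$ (the adjacency matrix $A$, the PageRank matrix $cAD^{-1}+(1-c)\vvec\bone^T$, or the random walk matrix $AD^{-1}$) has the property that $A_{ij}=1 \implies M_{ij}>0$ — for the adjacency matrix trivially, for $AD^{-1}$ because $M_{ij}=A_{ij}/d_j$ and $d_j>0$ by connectedness, and for the PageRank matrix because $M_{ij}\geq c A_{ij}/d_j>0$ — we conclude $M_{ij}>0$, contradicting the assumption. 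Hence $(A_\map)_{ij}=1$ implies $M_{ij}>0$.

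Combining the two implications gives $M_{ij}>0 \iff (A_\map)_{ij}=1$, which is the claim. I would also record the one-line consequence spelled out in the surrounding text: since $M$ and $A_\map$ then have the same sparsity pattern, $G_\map = G_M$ as graphs, and for the symmetric choices of subsection~\ref{ssec:choice_MT} this is exactly the underlying (connected, hence strongly connected) network $G_A$, so the hypotheses of Theorem~\ref{thm:theory} are met.

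The argument is essentially a case check and presents no real obstacle; the only mild subtlety is making sure every listed candidate for $M$ genuinely satisfies $A_{ij}=1\implies M_{ij}>0$, which relies on the standing assumption that the graph is connected so that every node has positive degree and the diagonal scaling $D^{-1}$ is well defined — I would state this explicitly rather than leave it implicit, and I would note that the hypothesis $\alpha\neq 0$ is exactly what rules out the degenerate $\alpha=0$ case where $M$ plays no role in $A_\map$ at all.
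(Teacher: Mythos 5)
Your proposal is correct and follows essentially the same route as the paper's proof: the forward implication from $\alpha>0$ and nonnegativity, and the reverse implication from the fact that every tensor in subsection~\ref{ssec:choice_MT} has $\ten T_{ijk}=\ten T_{ikj}=0$ unless $(i,j)\in E$, which in turn forces $M_{ij}>0$. Your explicit verification that each admissible $M$ satisfies $A_{ij}=1\implies M_{ij}>0$ is a small added care that the paper leaves implicit by identifying $(i,j)\in E$ with $M_{ij}>0$ throughout.
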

\begin{proof}
If $(i,j)\in E$ is an edge in the graph associated with $M$, i.e.\ $M_{ij}>0$,  then clearly $(A_\map)_{ij}=1$ as the tensor $\ten T$ has nonnegative entries. 
If $(i,j)\not\in E$, then from the possible definitions of the tensor $\ten T$ listed in Subsection~\ref{ssec:choice_MT} it follows that $\ten T_{ijk} = \ten T_{ikj} = 0$, for all $k$. Thus $(A_\map)_{ij} = M_{ij} = 0$.
Vice versa, if $(A_\map)_{ij} = 0$, then $\alpha M_{ij} + (1-\alpha)\sum_{k=1}^n(\ten T_{ijk}+\ten T_{ikj}) = 0$. Since we are summing two nonnegative terms, it follows that 
both are zero and, in particular, $M_{ij} = 0$. If $(A_\map)_{ij}=1$, on the other hand, this implies $\alpha M_{ij} + (1-\alpha)\sum_{k=1}^n(\ten T_{ijk}+\ten T_{ikj})>0$ and hence at least one of the two terms has to be positive; however, from the possible definitions of $\ten T$ it is clear that $\ten T_{ijk}$ and $\ten T_{ikj}$ cannot be nonzero unless $(i,j)\in E$, i.e., unless $M_{ij}>0$. 
\end{proof}

\section{Example network with theoretical comparison}
\label{sec:as}
In this section we describe theoretical results on the higher order centrality
measures. 
Our overall aim is to confirm that the incorporation of second order information can 
make a qualitative difference to the rankings. 
We work with networks of the form represented in Figure~\ref{fig:asymptotic_network}. These have three different types of  nodes: i) node $1$, the centre of the wheel, that has degree $m$ and connects to $m$ nodes of the second type, ii) $m$ nodes attached to node $1$ and interconnected via a cycle to each other. Each type (ii) node also connects to $k$ nodes of the third type, and iii) $mk$ leaf nodes attached in sets of $k$ to the $m$ nodes of type (ii). 
We will use
node $2$ to represent the nodes of type (ii) and 
node $m+2$ to represent the nodes of type (iii).

The network is designed so that node $1$ is connected to important nodes and is also
involved in many triangles. Node 2, by contrast, is only involved in two triangles
and has connections to the less important leaf nodes.  
If we keep $m$ fixed and increase the number of leaf nodes, $k$, 
then
eventually we would expect the centrality of node $2$ to overtake that of node $1$.
We will show that this changeover happens for a larger value of $k$ when we incorporate second order information.
More precisely,  
we set $p=1$ and 
show that node $1$ is identified by the higher-order measure as being more central than node $2$ for larger values of $k$ when compared with standard eigenvector centrality.

\begin{figure}
    \centering 
    \begin{tikzpicture}[scale=.9]
    \draw (0,0) node[scale=.7,circle,draw=black,thick](1){1};
    \draw (360/5: 1.5cm) node[scale=.8,circle,draw=black,thick](2){2};
    \draw (2*360/5: 1.5cm) node[scale=.8,circle,draw=black,thick](3){3};
    \draw (3*360/5: 1.5cm) node[scale=.8,circle,draw=black,thick](4){4};
    \draw (4*360/5: 1.5cm) node[scale=.8,circle,draw=black,thick](5){5};
    \draw (5*360/5: 1.5cm) node[scale=.5,circle,draw=black,thick](6){$m+1$};
    \draw (360/20: 2.5cm) node[scale=.7,circle,draw=black,thick](6c){};
    \draw (3*360/20: 2.5cm) node[scale=.7,circle,draw=black,thick](2a){};
    \draw (4*360/20: 2.5cm) node[scale=.7,circle,draw=black,thick](2b){};
    \draw (5*360/20: 2.5cm) node[scale=.7,circle,draw=black,thick](2c){};
    \draw (7*360/20: 2.5cm) node[scale=.7,circle,draw=black,thick](3a){};
    \draw (8*360/20: 2.5cm) node[scale=.7,circle,draw=black,thick](3b){};
    \draw (9*360/20: 2.5cm) node[scale=.7,circle,draw=black,thick](3c){};
    \draw (11*360/20: 2.5cm) node[scale=.7,circle,draw=black,thick](4a){};
    \draw (12*360/20: 2.5cm) node[scale=.7,circle,draw=black,thick](4b){};
    \draw (13*360/20: 2.5cm) node[scale=.7,circle,draw=black,thick](4c){};
    \draw (15*360/20: 2.5cm) node[scale=.7,circle,draw=black,thick](5a){};
    \draw (16*360/20: 2.5cm) node[scale=.7,circle,draw=black,thick](5b){};
    \draw (17*360/20: 2.5cm) node[scale=.7,circle,draw=black,thick](5c){};
    \draw (19*360/20: 2.5cm) node[scale=.7,circle,draw=black,thick](6a){};
    \draw (20*360/20: 2.5cm) node[scale=.7,circle,draw=black,thick](6b){};
    \path[-, thick] 
    (1)edge[] node[]{} (2) (1)edge[] node[]{} (3) (1)edge[] node[]{} (4) (1)edge[] node[]{} (5) (1)edge[] node[]{} (6)
    (2)edge[] node[]{} (3)  (2)edge[] node[]{} (6) (3)edge[] node[]{}(4)  (4)edge[] node[]{} (5);
    \path[-, thick]
    (2)edge[] node[]{} (2a) 
    (2)edge[] node[]{} (2b) 
    (2)edge[] node[]{} (2c) 
    (3)edge[] node[]{} (3a)
    (3)edge[] node[]{} (3b) 
    (3)edge[] node[]{} (3c) 
    (4)edge[] node[]{} (4a)
    (4)edge[] node[]{} (4b) 
    (4)edge[] node[]{} (4c) 
    (5)edge[] node[]{} (5a)
    (5)edge[] node[]{} (5b) 
    (5)edge[] node[]{} (5c) 
    (6)edge[] node[]{} (6a)
    (6)edge[] node[]{} (6b) 
    (6)edge[] node[]{} (6c); 
    \path[dotted,thick] 
    (5)edge[] node[]{} (6)
    (2c)edge[] node[]{} (2b)
    (3c)edge[] node[]{} (3b) 
    (4c)edge[] node[]{} (4b) 
    (5c)edge[] node[]{} (5b)
    (6c)edge[] node[]{} (6b)
    ; 
    \end{tikzpicture}
    \caption{Representation of the network used in Section~\ref{sec:as}. The network is a modified wheel graph where each of the $m$ nodes on the cycle are connected to $k$ leaves.}
    \label{fig:asymptotic_network}
\end{figure}
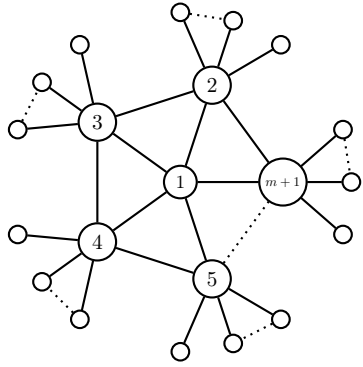

With this labeling of the nodes, the adjacency matrix $A\in\R^{n\times n}$ of the network has the form
\[
A = \left[
\begin{array}{c|c|ccc}
     0 & \bone_m^T & 0 & \cdots & 0 \\
     \hline
     \bone_m & C & & I_m\otimes \bone_k^T & \\
     \hline
     0 & & & & \\
     \vdots & I_m\otimes\bone_k & & & \\
     0 & & & & \\
     \end{array}\right], \qquad C = \left[
\begin{array}{cccc}
    0 & 1      &        & 1 \\
    1 & \ddots & \ddots &   \\
      &  \ddots& \ddots & 1 \\
    1 &        & 1      & 0  
\end{array}
\right] \in \R^{m\times m}\, .
\]
The 
eigenvector $\vvec = [x\;\; y\bone_m^T\;\; z\bone_{mk}^T]^T$ associated to the leading eigenvalue $\lambda = 1+\sqrt{1+m+k}$ of $A$ is such that 
$\lambda x = m\,y$
and it can be verified that 
\[ x > y \quad \text{ if and only if } \quad k < m(m-3).
\]

We now move on to the higher order setting. We begin by specifying the entries of the binary triangle tensor $\ten T_B = (\ten T_B)_{ijk}$ defined in \eqref{eq:TB}. 
It is clear that $(\ten T_B)_{ijk} = 0$ for all $i=m+2,\ldots,mk+m+1$. 
Moreover,
\[
(\ten T_B)_{1jk} = \begin{cases}
1 & \text{if  $j, k = 2,\ldots,m+1$ are such that $(j,k)\in E$}\\
0 & \text{otherwise},
\end{cases}
\]
and for $i = 2,\ldots,m+1$
\[
(\ten T_B)_{ijk} = 
\begin{cases}
1 & \text{if either $j=1$ and $(i,k)\in E$ or $k=1$ and $(i,j)\in E$} \\
0 & \text{otherwise.}
\end{cases}
\]
Using~\eqref{eq:Tp} it follows that
\[
((\ten T_B)_p(\vvec))_1  = 2my,\qquad ((\ten T_B)_p(\vvec))_2 = 4\mu_p(x,y),\qquad ((\ten T_B)_p(\vvec))_{m+2} = 0,
\]
where $\vvec = [x\;\; y\bone_m^T\;\; z\bone_{mk}^T]^T$ as before. 
Overall we thus have that equation \eqref{eq:eig_gen} rewrites as
\[
\left\{
\begin{array}{l}
    \lambda x = (2-\alpha) my \\
    \lambda y = \alpha (x + 2y + kz) + 4(1-\alpha)\mu_p(x,y)\\
    \lambda z = \alpha y. 
\end{array}
\right.
\]
For $p = 1$ and $\alpha\in(0,1]$, this system yields 
\[
x>y \quad \text{ if and only if } \quad k < \frac{(2-\alpha)}{\alpha^2}\left((2-\alpha)m^2  + (\alpha -4)m\right).
\]
The areas for which $x>y$ in the two settings (standard eigenvector centrality $\alpha = 1$ and higher order centrality $\alpha = 0.2,0.5$) are shaded in Figure~\ref{fig:shadeT} (left). 
It is readily seen that even for small values of $m$, $k$ needs to become very large (when compared to $m$) in order for the centrality of nodes $i = 2,\ldots, m+1$ to become larger than that of node $1$ when higher-order information is taken into account. 
In the standard eigenvector centrality setting we observe a very different behaviour (see Figure~\ref{fig:shadeT}, left, $\alpha = 1$). 

\begin{figure}
    \centering
    \includegraphics[width=\textwidth]{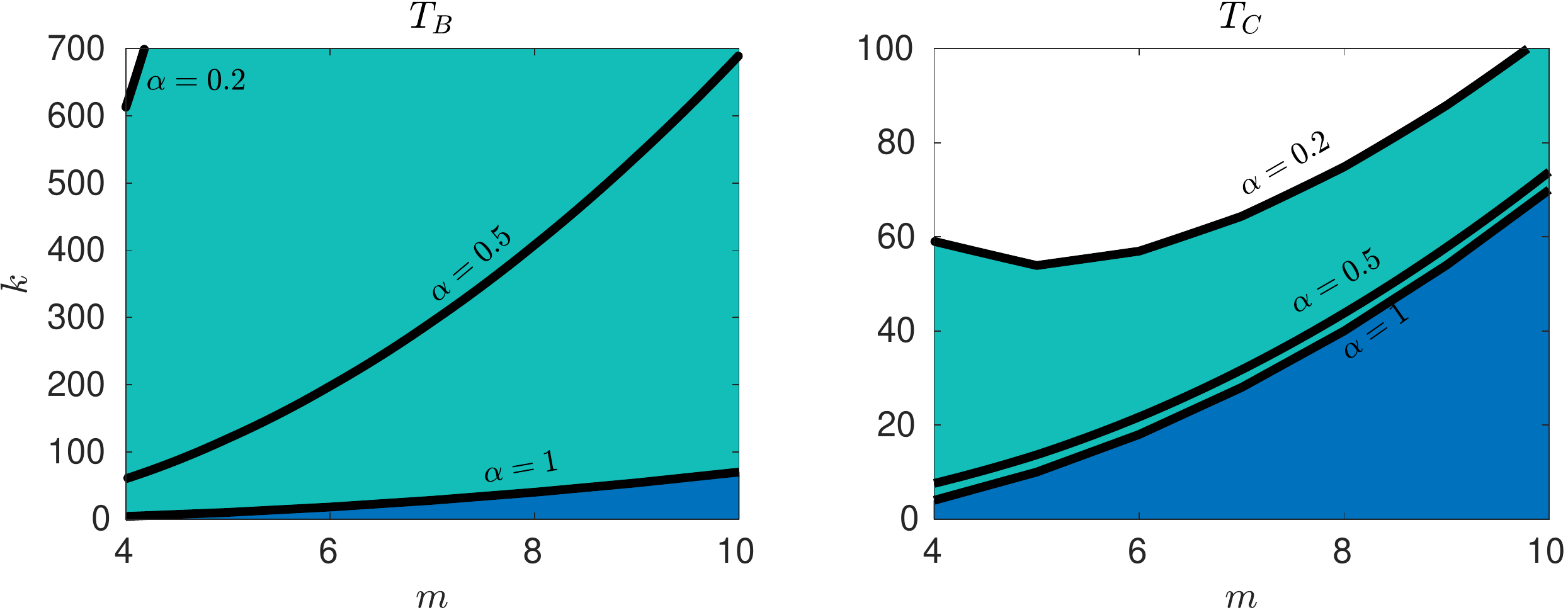}
    \caption{Values of $m$ and $k$ for which $x>y$ (shaded) for different values of $\alpha$, $p=1$ and tensors $\ten T_B$ (left) and $\ten T_C$ (right).}
    \label{fig:shadeT}
\end{figure}

In Figure~\ref{fig:shadeT} (right) we display the areas for which $x>y$ for different values of $\alpha$ when $\ten T_C\in\Rnnn$ is used in \eqref{eq:eig_gen}. 
Indeed, specializing the definition in~\eqref{eq:Tc} to this example, it is easy to see that 
\[
((\ten T_C)_p(\vvec))_1 = \frac{2y}{m-1}, \qquad  ((\ten T_C)_p(\vvec))_2 = \frac{4\mu_p(x,y)}{(k+3)(k+2)}, \qquad ((\ten T_C)_p(\vvec))_{m+2} = 0, 
\]
and therefore the solution to \eqref{eq:eig_gen} must satisfy 
\[
\left\{
\begin{array}{l}
    \lambda x = \left(\alpha m + \frac{2(1-\alpha)}{m-1}\right)y \\
    \lambda y = \alpha (x + 2y + kz) + \frac{4(1-\alpha)}{(k+3)(k+2)}\mu_p(x,y)\\
    \lambda z = \alpha y. 
\end{array}
\right.
\]
After some algebraic manipulation, we obtain that 
\[
x > y \quad \text{ if and only if } \quad \alpha m + \frac{2(1-\alpha)}{m-1}>\lambda. 
\]
If we now let $p = 1$, it is easy to show that $\lambda$ satisfies 
\begin{equation}\label{eq:lambda_toy}
\lambda^2 - (2\alpha + c_1)\lambda - (\alpha +c_1)(\alpha m + c_2) - k\alpha^2 = 0
\end{equation}
with $c_1 = \frac{2(1-\alpha)}{(k+3)(k+2)}$ and $c_2 = \frac{2(1-\alpha)}{m-1}$. 

\begin{remark}
Similarly, if the local closure triangle tensor $\ten T_L$ is used in the computation, we observe that $x>y$ if and only if $\alpha m + \frac{2(1-\alpha)}{k+2}>\lambda$ where now $\lambda$ satisfies \eqref{eq:lambda_toy} for $c_1 = \frac{2(1-\alpha)}{m+2k+3}$ and $c_2 = \frac{2(1-\alpha)}{k+2}$. There seems to be no appreciable difference between the profiles for $\alpha = 0.2, 0.5, 1$, and hence they are not displayed here. 
\end{remark}

\section{Applications and numerical results}\label{sec:numerical}

\subsection{Centrality measures}\label{ssec:num_cen}
In the previous subsection we observed that $\alpha$ 
may have a significant effect on the node rankings. 
Results were shown for $\ten T_B$ and $\ten T_C$, $p=1$ and $\alpha = 0.2, 0.5, 1$. 
In this subsection we test the role of $\alpha$ for real network data. We use $\alpha = 0.5$ and $\alpha = 1$ (corresponding to eigenvector centrality) and $p=0$ in \eqref{eq:eig_gen}, and
combine the adjacency matrix $A$ and the binary tensor $\ten T_B$. 

Our tests were performed on four real-world networks which are often used as benchmarks in the graph clustering and community detection communities, and are publicly available at \cite{SuiteSparse}. 
The {\sc Karate} network is a social network representing  friendships between the 34      
members of a karate club at a US university~\cite{karate}. The network {\sc C. Elegans} is a neural network. We use here an undirected and unweighted version of the neural network of C. Elegans compiled by Watts and Strogatz in \cite{WS98}, from original experimental data by White et al. \cite{white_celegans}. 
The network {\sc Adjnoun} is based on common adjective and noun    adjacencies in the novel ``David Copperfield'' by Charles Dickens~\cite{newman2006finding}.  
{\sc Chesapeake} represents the interaction network of the Chesapeake Bay ecosystem. 
Here, nodes represent species or suitably defined functional groups and links create the food web \cite{chesapeake}. 
In Table~\ref{tab:data} we report the number of nodes $n$, (undirected) edges $m$ and triangles $\triangle = \text{trace}(A^3)/6$ for the four networks. 
We further display the global clustering coefficient $\widehat{C}$, the average clustering coefficient $\overline{c}$, and the average spectral clustering coefficient $\overline{\xvec}_C$, as well as the average local closure coefficient $\overline{\wvec}$ \cite{yin2019local} and its spectral counterpart $\overline{\xvec}_L$; see Defintion~\ref{def:spec_clus}. 

\begin{table}[]
    \centering
    \begin{tabular}{c|*{8}c}
    \hline
    Name & $n$ & $m$ & $\triangle$ & $\widehat{C}$ & $\overline{\cvec}$ & $\overline{\xvec}_C$ & $\overline{\wvec}$ & $\overline{\xvec}_L$ \\
    \hline
    \hline
        \sc{Karate} & 34 & 78 & 45  & 0.26 & 0.57 & 0.12 & 0.22 & 0.23\\
        \sc{Chesapeake} & 39 & 170 & 194 & 0.28 & 0.45 &  0.41 & 0.25 & 0.38\\
        \sc{Adjnoun} & 112 & 425 & 284 & 0.16 & 0.17 &  0.18 & 0.09 & 0.18\\
        \sc{C. Elegans} & 277 & 1918 & 2699 & 0.19 & 0.28 & 0.05 &  0.15 & 0.20\\
        \hline
    \end{tabular}
    \caption{Description of the dataset: $n$ is the number of nodes, $m$ is the number of edges, $\triangle$ is the number of triangles, $\widehat{C}$ is the global clustering coefficient of the network, $\overline{\cvec}$ is the average clustering coefficient, $\overline{\xvec}_C$ is the average spectral clustering coefficient, $\overline{\wvec}$ is the average local closure coefficient, and $\overline{\xvec}_L$ is the average spectral local closure coefficient.}
    \label{tab:data}
\end{table}

 \begin{figure}[!t]
     \centering
     \includegraphics[width=\textwidth]{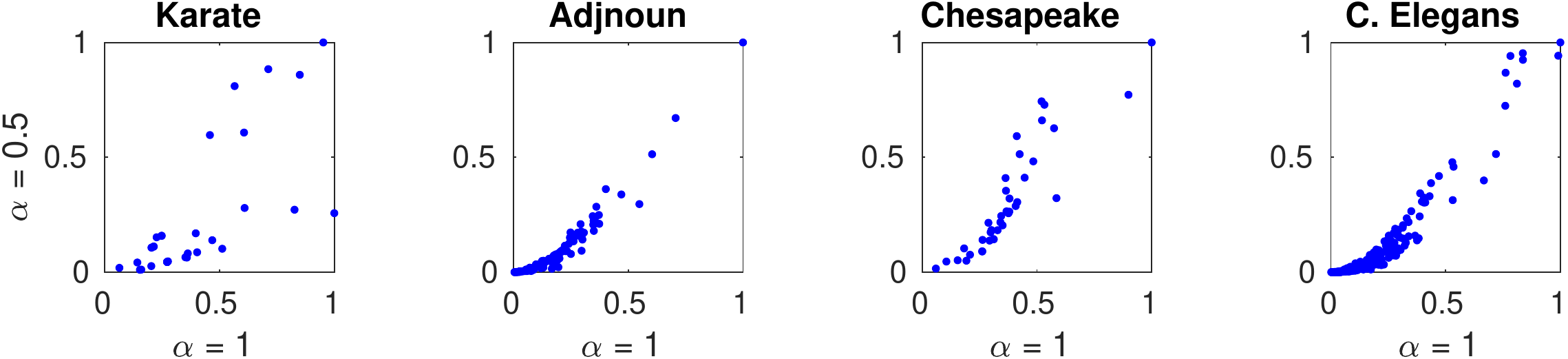} 
     \caption{Scatter plot of the solution to~\eqref{eq:eig_gen} with $M=A$ and $\ten T = \ten T_B$. The plot shows the solution for $\alpha = 0.5$ and $p=0$ versus standard eigenvector centrality, i.e., \eqref{eq:eig_gen} for $\alpha = 1$.}
     \label{fig:scatter}
 \end{figure}
 
Figure~\ref{fig:scatter} scatter plots the newly introduced measure against eigenvector centrality for the four different networks. The centrality vectors are normalized with the infinity norm. 
For the network {\sc Karate} we see very poor correlation between the two measures. 
Stronger correlation is displayed for the other networks, but it is still to be noted that the top ranked nodes (corresponding to the nodes with largest centrality scores) differ for the two measures in all but one network, namely {\sc Adjnoun}. 
Hence, using second order information 
can alter our conclusions about which nodes are the most central.

\subsection{Link Prediction}\label{ssec:num_link}
Link prediction is a fundamental task in network analysis: given a network $G_0 = (V,E_0)$,
we must identify edges that are not in $E_0$ but should be there. 
This problem typically arises in two settings:
(a) in a dynamic network where new connections appear over time, and 
(b) in a noisily observed network, where it is suspected that edges are missing  \cite{clauset2008hierarchical,liben2007link,lu2011link}.    

For convenience, let us assume that $E_0$ is the set of edges that we observe and that $E_1$ with $E_1\cap E_0=\emptyset$ is the set of edges that should be predicted, i.e., those that will appear in an evolving network or that are missing in a noisy graph. A standard approach for link prediction is to create a {\it similarity matrix} $S$,  whose entries $S_{ij}$ quantify the probability that $(i,j)\in E_1$. It is worth pointing out that since $E_0\cap E_1 = \emptyset$, then the nonzero pattern of $S$ will be complementary to that of the adjacency matrix of $G_0$. 
Over the years, several similarity measures have been proposed in order to quantify which nodes are most likely to link to a given node $i$ \cite{martinez2017survey}. While classical methods usually exploit the first order structure of connections around $i$, there is a growing interest in second order methods that take into account, for example, triangles. 

In this context, we propose a new similarity measure based on $\map$ and its Perron eigenvector.
This measure is a generalization of a well-known technique known as {\it seeded} (or \textit{rooted}) \textit{PageRank} \cite{gleich2016seeded,jeh2003scaling}, which we
now describe. 
Given a seed node $\ell \in V$ and a teleportation coefficient $0\leq c < 1$, let $\xvec^{(\ell)}$ be the limit of the evolutionary process
\begin{equation}\label{eq:pr_link}
    \xvec_{k+1} = c P\xvec_{k} + (1-c)\bone_\ell, \qquad k=0,1,2,\dots
\end{equation}
where $P$ is the random walk matrix $P=AD^{-1}$. As $0 \leq  c <  1$,  it is easy to show that the limit exists and that it coincides with 
the solution to the linear system
\begin{equation}\label{eq:lsys}
    (I-cP)\xvec^{(\ell)} = (1-c)\bone_\ell\, .
\end{equation}
The seeded PageRank similarity matrix $S_{PR}$ is then entrywise defined by 
$$
(S_{PR})_{ij} = (\xvec^{(i)})_j + (\xvec^{(j)})_i\, .
$$
The idea behind \eqref{eq:pr_link} is that the sequence $\xvec_k$ is capturing the way a unit mass centered in $\ell$ (the {\it seed} or {\it root} of the process), and represented in the model by $\bone_\ell$,  propagates throughout the network following the diffusion rule described by $P$. 
This diffusion map is a first order random walk on the graph. 

In order to propose a new, second order, similarity measure, we replace this first order map with the second order diffusion described by $\map = \alpha M+ (1-\alpha)\ten T_p$ and we consider the associated diffusion process. 
To this end, we begin by observing that, independently of the choice of the starting point $\xvec_0$ in \eqref{eq:pr_link}, the first order diffusion process will always converge to $\xvec^{(\ell)}$ that satisfies $\|\xvec^{(\ell)}\|_1=1$. Indeed, \eqref{eq:lsys} yields
$$
\|\xvec^{(\ell)}\|_1 = (1-c)\|\sum_{k\geq 0}c^k P^k \bone_\ell\|_1 = (1-c)\sum_{k\geq 0}c^k=1\, .
$$
As a consequence, the limit of the sequence \eqref{eq:pr_link} coincides with the limit of the normalized iterates $\hat \xvec_{k+1} = c P\xvec_{k} + (1-c)\bone_\ell$, with $\xvec_{k+1} = \hat \xvec_{k+1}/\|\hat \xvec_{k+1}\|_1$. 
On the other hand, when the linear process $P$ is replaced by the nonlinear map $\map$, the unnormalized sequence may not converge. 
We thus need to impose normalization of the vectors in our dynamical process defined in terms of $\map$ and seeded in the node $\ell$: 
\begin{align}\label{eq:us_link}
\begin{aligned}
     &\hat{\yvec}_{k+1} = c \map(\yvec_k) + (1-c)\bone_\ell\qquad k=0,1,2,\dots \\
     &\yvec_{k+1} = \hat{\yvec}_{k+1} / \|\hat{\yvec}_{k+1}\|_1.
\end{aligned}
\end{align}
Note that, for $\alpha=1$ and $M=P$ in \eqref{eq:map} we retrieve exactly the rooted PageRank diffusion \eqref{eq:pr_link}. 
 Unlike the linear case, the convergence of the second order nonlinear process \eqref{eq:us_link} is not straightforward. 
  However, 
  ideas from the proof of Theorem \ref{thm:theory} can be 
  used to show that the convergence is guaranteed for any choice of the tensor $\ten T$, of the matrix $M$, and of the starting point $\yvec_0\geq 0$, provided that the  graph $G_\map$ is aperiodic. 

\begin{figure}[t]
    \centering
    \includegraphics[width=.275\textwidth,trim=0 -2em 0 0]{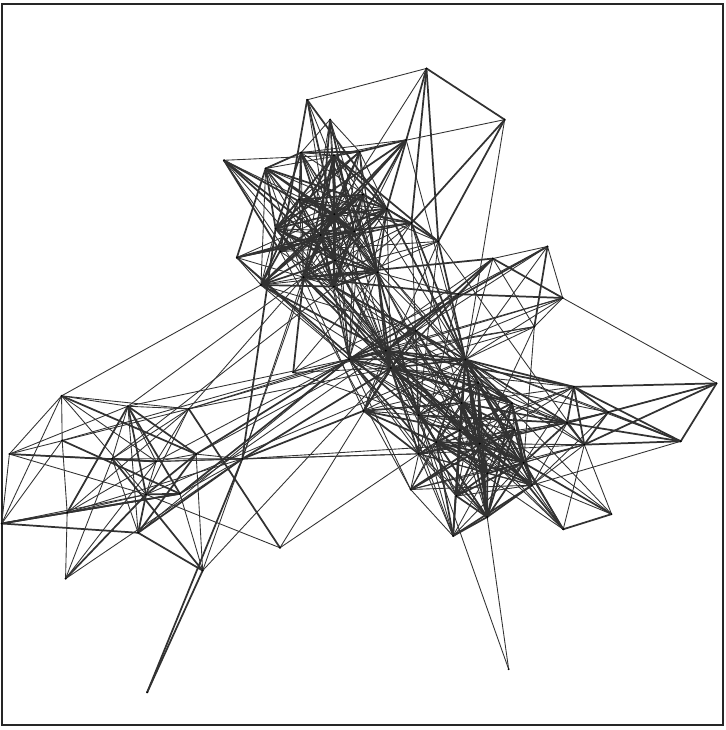}\hspace{2em}
    \includegraphics[width=.31\textwidth]{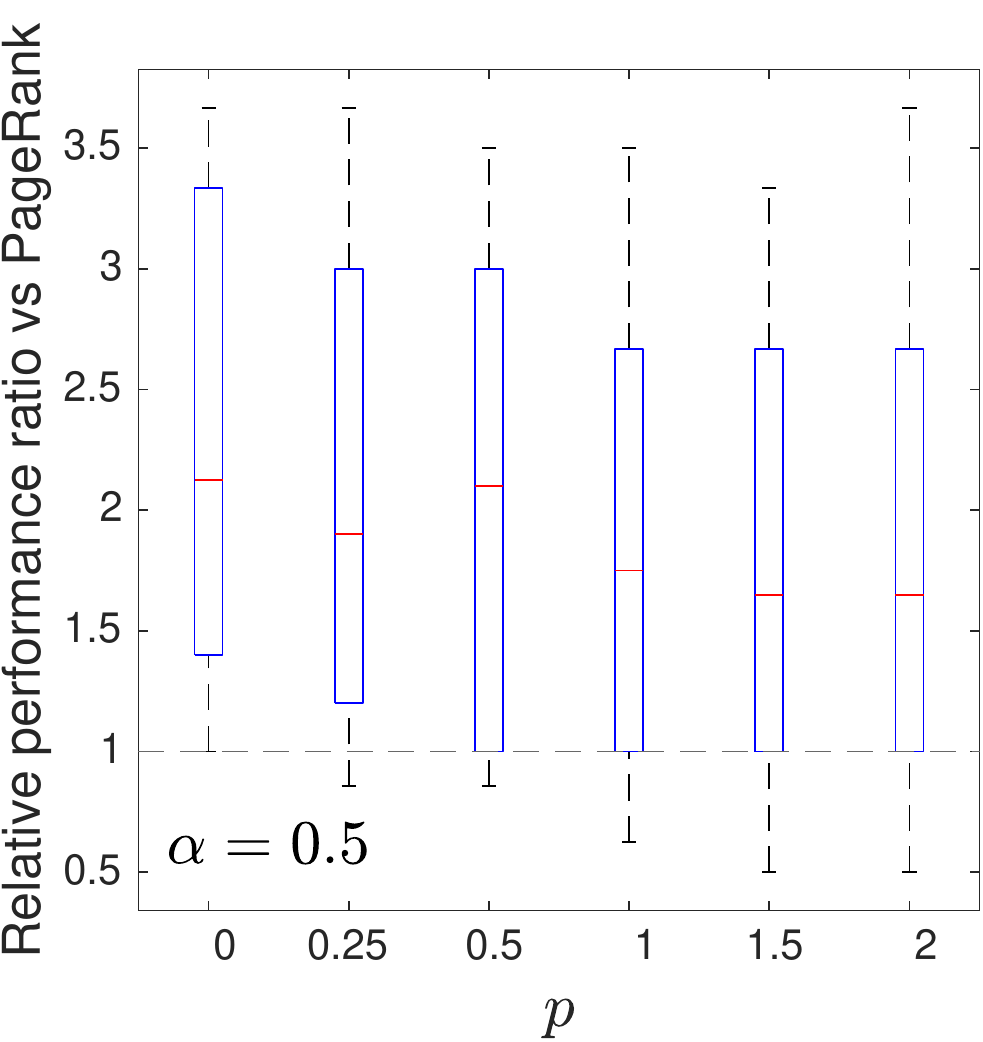}\hspace{.5em}
    \includegraphics[width=.29\textwidth,trim=0cm -.5em 0cm 0cm, clip]{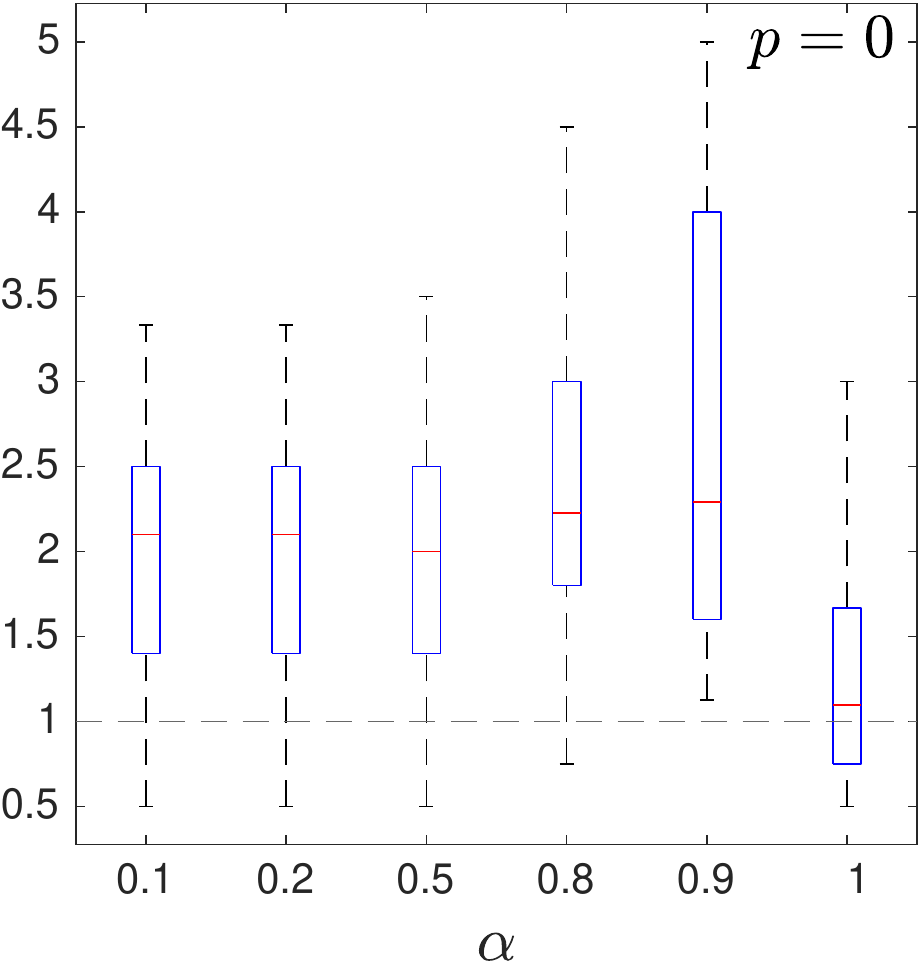}\\
    \includegraphics[width=.276\textwidth,trim=-.4em -2em .4em 0]{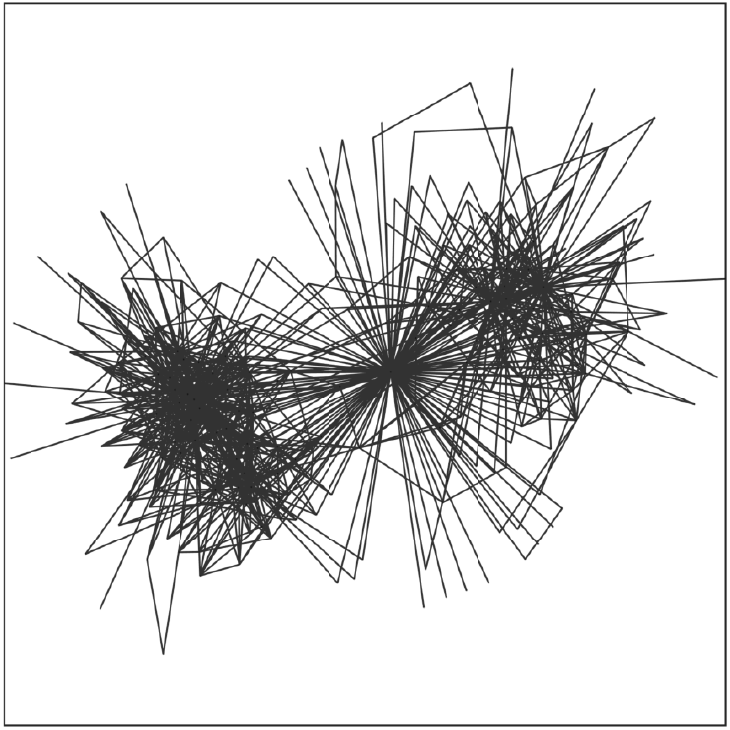}\hspace{2em}
    \includegraphics[width=.31\textwidth]{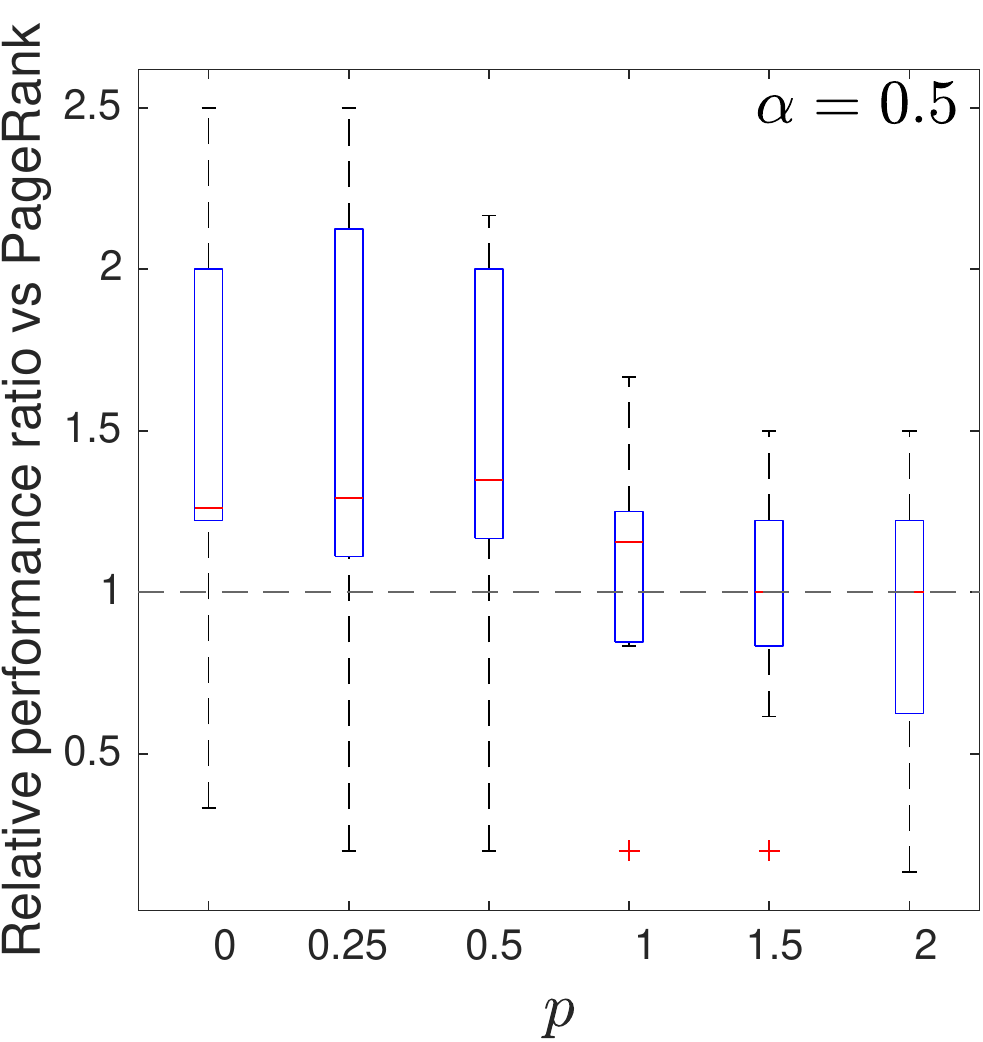}\hspace{.5em}
    \includegraphics[width=.305\textwidth,trim=2.7cm -.1cm -.3cm 0.5cm,clip]{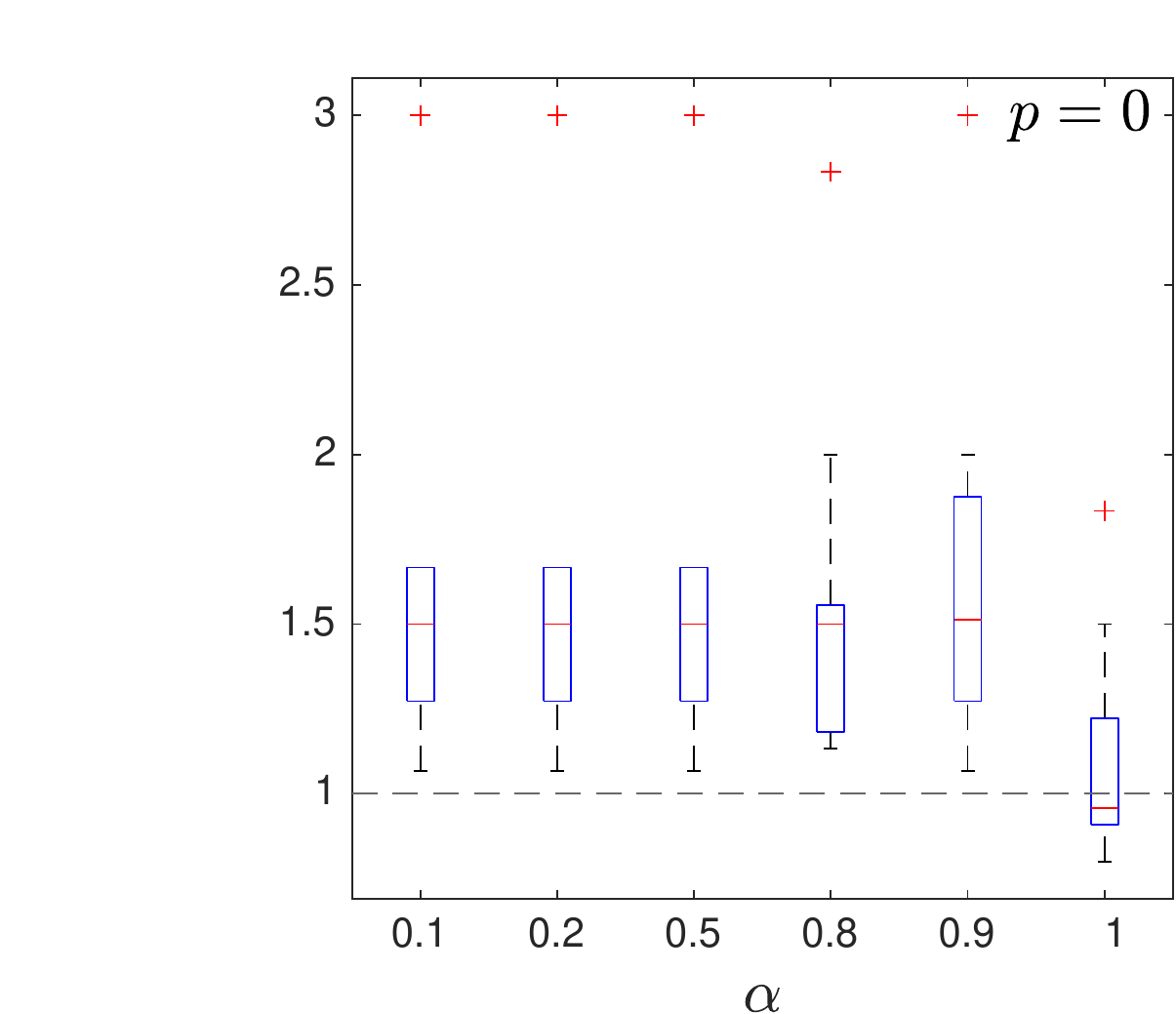}
    \caption{Link prediction performance comparison on two network dataset: {\sc UK faculty} dataset (top) and {\sc Small World citation} network (bottom). The plots show means and quartiles of the ratio between the fraction of correctly predicted edges using $S_\map$ and the one obtained using $S_{PR}$, over ten random trials for different values of $p$ and $\alpha$~in~\eqref{eq:map}. }
    \label{fig:linkprediction}
\end{figure}

\begin{corollary}
 Let $\map:\RR^n\to \RR^n$ be  as in Definition \ref{def:map} and let $G_\map$ be its adjacency graph, as per Definition \ref{def:A_M}. If $G_\map$ is aperiodic and $\yvec_0> 0$, then  $\yvec_k$ defined in \eqref{eq:us_link} for a given seed $\ell$ converges to a unique stationary point $\yvec^{(\ell)}> 0$.
\end{corollary}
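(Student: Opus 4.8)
The plan is to reduce the statement to Theorem~\ref{thm:theory} by recognising the normalised sequence in \eqref{eq:us_link} as the nonlinear power method of a suitable homogeneous, order‑preserving map. The obstruction to doing this directly is the constant forcing term $(1-c)\bone_\ell$, which makes the update $\hat{\yvec}_{k+1}=c\map(\yvec_k)+(1-c)\bone_\ell$ inhomogeneous. The key point is that every iterate with $k\geq 1$ satisfies $\bone^T\yvec_k=1$, so the constant $(1-c)\bone_\ell$ may be replaced by the degree‑one homogeneous term $(1-c)(\bone^T\yvec_k)\bone_\ell$ without changing the iterates. Accordingly I would introduce
\[
g(\yvec)=c\,\map(\yvec)+(1-c)\,(\bone^T\yvec)\,\bone_\ell .
\]
This map is entrywise nonnegative, homogeneous of degree one, and order preserving (each summand is), and, collecting terms, it is again of the form \eqref{eq:map}: $g(\yvec)=\big(c\alpha M+(1-c)\bone_\ell\bone^T\big)\yvec+c(1-\alpha)\,\ten T_p(\yvec)$, i.e.\ the map associated with the nonnegative matrix $c\alpha M+(1-c)\bone_\ell\bone^T$ and the nonnegative tensor $c(1-\alpha)\ten T$ (up to an irrelevant overall positive scaling of these two objects), so Theorem~\ref{thm:theory} applies to it.

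Second, I would compute the adjacency graph $G_g$ as in Definition~\ref{def:A_M}. For $i\neq \ell$ the $i$‑th row of $G_g$ equals the $i$‑th row of $G_\map$, while the $\ell$‑th row is complete, since $g(\yvec)_\ell$ contains the term $(1-c)\bone^T\yvec$ and $c>0$, $1-c>0$. Thus $E(G_\map)\subseteq E(G_g)$, the node $\ell$ has an outgoing edge to every node, a self‑loop included, and hence: $G_g$ is strongly connected whenever $G_\map$ is (from any node reach $\ell$ inside $G_\map$, then leave $\ell$ to anywhere); and $G_g$ is aperiodic, because the self‑loop at $\ell$ is a cycle of length one. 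Hence $A_g$ is primitive. (Here, as in Theorem~\ref{thm:theory}, I read the hypothesis of the corollary as: $G_\map$ strongly connected and aperiodic; in fact only strong connectivity of $G_\map$ is really needed, aperiodicity of $G_g$ coming for free from the seed term.)

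Third, Theorem~\ref{thm:theory} applied to $g$ gives a positive eigenvector $\b u>0$, unique up to scalar multiples, with $g(\b u)=\rho(g)\,\b u$, and for every starting point $>0$ the normalised power method $\yvec_{k+1}=g(\yvec_k)/\|g(\yvec_k)\|_1$ converges to $\yvec^{(\ell)}:=\b u/\|\b u\|_1>0$. It remains to connect this with \eqref{eq:us_link}. If $\yvec_0>0$ then $\hat{\yvec}_1=c\map(\yvec_0)+(1-c)\bone_\ell>0$, because $\map(\yvec_0)>0$ (a consequence of $\yvec_0>0$ together with strong connectivity of $G_\map$); hence $\yvec_1\in\Delta^{\circ}$ with $\bone^T\yvec_1=1$. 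For every $k\geq 1$ we then have $\hat{\yvec}_{k+1}=c\map(\yvec_k)+(1-c)(\bone^T\yvec_k)\bone_\ell=g(\yvec_k)$, so from $k=1$ onwards \eqref{eq:us_link} is exactly the $\ell^1$‑normalised power method of $g$ started at the positive vector $\yvec_1$; therefore $\yvec_k\to\yvec^{(\ell)}>0$. Uniqueness of the stationary point among positive vectors is then immediate: if $\zvec\in\Delta^{\circ}$ is stationary for \eqref{eq:us_link} then, using $\bone^T\zvec=1$, it satisfies $g(\zvec)=\|g(\zvec)\|_1\,\zvec$, so $\zvec$ is a positive eigenvector of $g$ and hence a multiple of $\b u$, which forces $\zvec=\yvec^{(\ell)}$.

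The only genuinely delicate step is the first one: realising that the constant forcing can be homogenised without altering the iterates, and then verifying that the homogenised map $g$ has a primitive adjacency graph. Once $g$ is in hand the statement is a direct application of Theorem~\ref{thm:theory}, and no new nonlinear Perron–Frobenius input is required.
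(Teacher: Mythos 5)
Your proof is correct and follows essentially the same route as the paper: the paper's proof introduces the homogenised map $\mathcal F(\yvec)=c\map(\yvec)+(1-c)\|\yvec\|_1\bone_\ell$, which on the nonnegative orthant is exactly your $g$, notes its homogeneity and the irreducibility of its Jacobian, and then invokes the arguments of Theorem~\ref{thm:theory}. Your additional observation that $g$ is literally an instance of \eqref{eq:map} with matrix $c\alpha M+(1-c)\bone_\ell\bone^T$ (whose complete $\ell$-th row gives strong connectivity and aperiodicity of $G_g$ for free) lets you apply the theorem as a black box rather than re-running its proof, and your explicit handling of the first, possibly unnormalised, iterate is a small but welcome extra care.
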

\begin{proof}
Let $\mathcal F:\RR^n\to\RR^n$ be the map $\mathcal F(\yvec) = c \map(\yvec) + (1-c)\|\yvec\|_1 \bone_\ell$, where we have omitted the dependency of the map on $\ell$ for the sake of simplicity. 
Note that the limit points of \eqref{eq:us_link} coincide with the fixed points of $\mathcal F$ on the unit sphere $\|\yvec\|_1 = 1$. Note moreover that $\mathcal F$ is homogeneous, i.e.,\ $\mathcal F(\theta\yvec) = \theta \mathcal F(\yvec)$, for all $\theta>0$. Finally, notice that the $j$-th column of the Jacobian matrix of $\mathcal F$ evaluated at $\zvec$ is 
$$
\frac{\partial}{\partial y_{j}}\mathcal F(\zvec) = c\frac{\partial}{\partial y_{j}} \map(\zvec) + (1-c)\bone_\ell \, ,
$$
which shows that if the Jacobian of $\map$ is irreducible, the same holds for the 
Jacobian of $\mathcal F$. With these observations, the  thesis follows straightforwardly using the same arguments as in the proof of Theorem \ref{thm:theory}, applied to $\mathcal F$.
\end{proof}
As for the linear dynamical process, the stationary distributions of \eqref{eq:us_link} computed for different seeds allow us to define the similarity matrix $S_\map$:  
\[(S_\map)_{ij}  = (\yvec^{(i)})_j + (\yvec^{(j)})_i.\] 

In Figure \ref{fig:linkprediction} we compare the performance of the link prediction algorithm based on the standard seeded PageRank similarity matrix $S_{PR}$ \eqref{eq:pr_link} and the newly introduced similarity matrix $S_\map$ \eqref{eq:us_link} induced by $\map$ with $M=P$ and $\ten T = \ten T_W$, the random walk triangle tensor.   
 The tests were performed on the real-world networks {\sc UK faculty} and {\sc Small World citation}. The network {\sc UK faculty}\cite{PhysRevE.77.016107} represents the personal friendships network between the faculty members of a UK University.  The network contains $n=81$ vertices and $m = 817$ edges. 
 The network {\sc Small  World citation} \cite{garfield2004histcite} represents citations among papers that directly cite Milgram's small world paper  or contain the words ``Small World'' in the title. This network contains $n=233$ nodes
and $m=994$ edges. We transformed both networks by neglecting edge directions and weights.
 
The experiments were performed as follows. 
We start with an initial network $G=(V,E)$ and we randomly select a subset of its edges, which we call $E_1$, of size $|E_1|\approx |E|/10$. We then define $G_0=(V,E_0)$ to be the graph obtained from $G$ after removal of the edges in $E_1$, so that $E_0=E\setminus E_1$. Thus, working on the adjacency matrix of $G_0$, we build the two similarity matrices $S_{PR}$ and $S_\map$. Then, for each similarity matrix $S$, we select from $V\times V \setminus E_0$ the subset $E_S$ containing the $|E_1|$ edges with the largest similarity scores $S_{ij}$. 
A better performance corresponds to a larger size of $E_1\cap E_S$, since this is equivalent to detecting more of the edges that were originally in the graph. 
To compare the performance of the two similarity matrices, we thus computed the ratio $|E_{S_\map}\cap E_1|/|E_{S_{PR}}\cap E_1|$. In Figure~\ref{fig:linkprediction} we boxplot this quantity over 10 random runs  where $E_1$ is sampled from the initial $E$ with a  uniform probability.  Whenever the boxplot is above the threshold of 1, our method is outperforming standard seeded PageRank. 
The middle plots in the figure display the results for the two networks when $\alpha=0.5$ in \eqref{eq:map} and we let $p$ vary. On the other hand, the plots on the right display results for varying values of $\alpha$ and $p=0$, which was observed to achieve the best performance in the previous test. 
Overall, the link prediction algorithm based on the similarity matrix $S_\map$ typically outperforms the alternative based on $S_{PR}$, especially for small values of $p$.

\section{Conclusion}
\label{sec:conc}
After associating a network with its adjacency matrix,
it is a natural step   
to 
formulate eigenvalue problems that 
quantify nodal characteristics. 
In this work we 
showed that cubic tensors can be used 
to create a corresponding set of nonlinear eigenvalue problems
that build in higher order effects; notably triangle-based motifs.  
Such spectral measures 
automatically incorporate the mutually reinforcing 
nature of eigenvector and PageRank centrality.
As a special case, we specified
a mutually reinforcing version of the 
classical Watts--Strogatz clustering coefficient.

We showed that our  general framework 
includes a range of approaches for combining 
first and second order  
interactions, and, for all of these, we gave existence and uniqueness results along with 
an effective computational algorithm.
Synthetic and real networks were used to 
illustrate the approach.

Given the recent growth in activity around higher order network features
\cite{BASJK18,B19,benson2015tensor,chodrow2019configuration,EG19,huang2019higher,IPBL19,lambiotte2019networks,li2019motif,tudisco2017node,yin2018,yin2019local,yin2017},
there are many interesting directions in which 
this work could be further developed, including  
the design of such 
centrality 
measures for weighted, directed and dynamic networks, and the study of 
mechanistic network growth models that incorporate higher-order information.

\section*{Data and funding}
The data used in this work is available in the public domain, as indicated in the text. The code used in the experiments is available at \url{https://github.com/ftudisco/SpectralClCoeff}. The three authors contributed equally to the manuscript. The work of FA was supported by fellowship ECF-2018-453 from the Leverhulme Trust.  The work of DJH was supported by EPSRC/RCUK Established Career Fellowship EP/M00158X/1 and by EPSRC Programme Grant EP/P020720/1. The work of FT was partially supported by INdAM--GNCS  and by EPSRC Programme Grant EP/P020720/1.


\end{document}